\newtheorem{lemma}{Lemma}
\newtheorem{theorem}{Theorem}
\theoremstyle{remark}
\newtheorem*{remark}{Remark}
\newcommand{\tr}{{\rm tr\, }}
\newcommand{\te}[2]{\parbox[b]{#1 cm}{ \centering #2}}
\DeclareMathOperator{\SL}{SL}
\DeclareMathOperator{\ESL}{ESL}
\DeclareMathOperator{\GL}{GL}
\begin{document}
\begin{titlepage}
\begin{center}
\LARGE Spectra of phase point operators in odd prime dimensions and the extended Clifford group
\end{center}
\vspace{1 cm}
\begin{center} D. M. Appleby\footnote{D.M.Appleby@qmul.ac.uk}\\ Department of Physics,
Queen Mary, University of London, Mile End Road, London E1 4NS, UK
 \end{center}
 \vspace{0.5 cm}
\begin{center}  Ingemar Bengtsson\footnote{ingemar@physto.se}\\Stockholm University, AlbaNova, Fysikum,
106 91 Stockholm, Sweden 
\end{center}
 \vspace{0.5 cm}
\begin{center} S. Chaturvedi\footnote{scsp@uohyd.ernet.in}\\
Institute for Mathematical Sciences, Imperial College London,\\ London SW7
2BW, UK,\\
and
 School of Physics, University of Hyderabad, Hyderabad 500 046,
India
 \end{center}
\vspace{4 cm}
\begin{center}
\parbox{12 cm }{\begin{center} \textbf{Abstract} \end{center}
\vspace{0.15 cm}
We analyse the role of the Extended Clifford group in classifying the spectra of phase point operators within the 
framework laid out by Gibbons et al for setting up Wigner distributions on discrete phase spaces based on finite 
fields. To do so we regard the set of all the discrete phase spaces as a symplectic vector space over the finite 
field. Auxiliary results include a derivation of the conjugacy classes of ${\rm ESL}(2, \mathbb{F}_N)$.}
\end{center}
\end{titlepage}
\newpage
\section{Introduction}
The Wigner distribution, introduced by Wigner in 1932 \cite{1} for the purpose of associating a phase space distribution to quantum systems with $\mathbb{R}^N$ as the configuration space, has, over the years, been extended and generalised in many directions
 \cite{2}-\cite{5}, differing 
in the aspects of the orginal definition that one chooses to retain. Among these efforts, those aimed at setting up Wigner distributions for finite state systems have, of late, been of great interest because of their potential relevance to quantum information theory
\cite{7}. For the case of quantum systems with Hilbert space dimension $N$ equal to power of a prime, Gibbons et al \cite{5}  developed a formulation for Wigner distributions which elegantly ties up Wigner distributions or equivalently the phase point operators with mutually unbiased bases    \cite{8}, \cite{9} available in such dimensions. This approach yields not one but many possible definitions of phase point operators and hence of Wigner distributions and Gibbons et al propose a scheme for classifying them on the basis of the action of the Clifford group.  In the present work we examine 
this classification, 
taking the eigenvalue spectra of the phase point operators as the sole criterion. We highlight the role which the extended Clifford group plays in this context. In other words, 
in line with the  observations in  \cite{10} in the context of SIC POVMS \cite{11}, we argue 
that, in so far as the spectra of the phase point operators is concerned, what is relevant is not the Clifford Group, but the extended Clifford group. This has the consequence that the number of 'similarity' classes in the sense of spectra is 
smaller than that given by the Clifford group action alone---for $N=7$ it is $210$ as opposed to $360$. 

A brief summary of this work is as follows: In Section II we 
recapitulate the ideas developed by Gibbons et al leading to the definition of generalised Wigner distributions. 
We restrict ourselves to $N$ an odd prime, and set up the notation and terminology used later. In Section III,  we develop a convenient scheme for labelling the phase point operators by elements of $V_{N+1}(\mathbb{Z}_N)$, an $N+1$ dimensional vector space over the finite field $\mathbb{Z}_N$, and transcribe the actions of the Heisenberg-Weyl (H-W) group and the Clifford group in terms of linear transformations on $V_{N+1}(\mathbb{Z}_N)$. By a suitable choice of basis in $V_{N+1}(\mathbb{Z}_N)$, the scheme developed here is shown to yield a convenient way of labelling the `affine planes', sets of phase point operators related to each other by the H-W group action, as elements of $V_{N-1}(\mathbb{Z}_N)$. This machinery is put to use in Section IV to study the action of the Clifford group on the set of phase point operators and affine planes with the purpose of classifying them into orbits under the Clifford group. In Section V, we examine the spectra of the phase point operators belonging to distinct orbits and find that several orbits are degenerate in the spectral sense, suggesting the presence of an extra symmetry. In Section VI we  
show that the extra symmetry is related to the operation of complex conjugation, and that the number of distinct spectra of phase point operators can be quantitatively understood in terms of the  Extended Clifford Group - the Clifford group augmented by the operation of complex conjugation. This section contains our principal results. We conclude in Section VI with a summary 
and an outlook. The appendix contains a derivation of the conjugacy classes of the group 
$\ESL(2,\mathbb{F}_N)$, where $\mathbb{F}_N$ is a finite field. 


\section{Generalised Wigner distributions} 

We begin by recalling the salient features of the framework laid out by Gibbons et al for setting up discrete phase spaces and defining Wigner distributions thereon though not necessarily in the same logical sequence. Their general framework applies to quantum systems of prime power dimensions, but we 
confine ourselves to the odd prime case.  
\newcounter{spacerule}
\begin{list}{\emph{\arabic{spacerule}.}}{\setlength{\leftmargin}{.5 in}\setlength{\rightmargin}{.5 in}\usecounter{spacerule}}
\item \emph{Hilbert Space}:
Consider the Hilbert space $\mathcal{H}_N$ associated with an $N$-level quantum system and introduce in it an orthonormal basis $\{|q\rangle \colon q\in\mathbb{Z}_N \}$, referred to as the coordinate or the standard basis, and a momentum basis $\{|p)\colon p\in\mathbb{Z}_N \}$ related to it by a finite Fourier transformation:
\begin{equation}
|p)=\frac{1}{\sqrt{N}}\sum_{i=1}^{N}~\omega^{pq}|i\rangle  ;~~ \omega = e^{2\pi i/N}.
\end{equation} 
\item{{\it Phase Space}:} With the $N$-level quantum system we associate a  phase space consisting of points $(q,p)$ with $q,p\in\mathbb{Z}_N$, for convenience arranged in the usual Cartesian fashion. The fact that $ \mathbb{Z}_N$ is a field has the consequence that the phase space can be decomposed into $N$ parallel lines, striations, in $N+1$ ways. For each striation, we can select the line through the origin as  its representative and call it a `ray'. 
We thus have $N+1$ rays where each ray consists of points of the form $(sq,sp)$ with $q,p$ fixed, not both zero and $s$ taking all values in $\mathbb{Z}_N$. For obvious reasons, the two special rays $(s,0)$ and  $(0,s)$ are denoted as horizontal and  vertical rays respectively and the corresponding striations as horizontal and vertical striations. We arrange the striations with these two as the extremes:
\begin{equation}
{\rm vertical},\cdots \cdots\cdots \cdots,{\rm horizontal}.
\end{equation}
\item\emph{Transformations on Phase Space}:  Two kinds of linear operations on the phase space which map the phase space to itself naturally arise: 
\begin{itemize}
\item Translations, $T(q_0,p_0)$, $N^2$ in number, have the effect of moving a phase point by an amount $(q_0,p_0)$. These operations constitute an abelian group and map a line in one striation to another in the same striation. Thus, for instance, starting from a ray, except the horizontal, all other members of the corresponding striation can be reached through $T(1,0)$. For the horizontal ray, one can achieve this through $T(0,1)$.
\item `Symplectic Rotations'
       \begin{equation} \left( \begin{array}{c} q^\prime  \\ p^\prime \end{array} \right) = \left( \begin{array}{cc} \alpha & \beta \\ \gamma & \delta \end{array} \right) \left( \begin{array}{c} q  \\ p \end{array} \right) ,
\end{equation}
with $\alpha,\beta,\gamma,\delta \in \mathbb{Z}_N$ and $\alpha\delta-\beta\gamma=1~~({\rm mod}~ N)$. Such $2\times 2$ matrices $S$ constitute the group $\SL(2,\mathbb{Z}_N)$, containing $N(N^2-1)$ elements.  Its generators are
\begin{equation} 
g_1 = \left( \begin{array}{cc} 1 & 1 \\ 0 & 1 \end{array} \right), \hspace{12mm}
g_2 = \left( \begin{array}{cc} 0 & 1 \\ -1 & 0 \end{array} \right)
\ . 
\label{4}
\end{equation}
These transformations map rays onto rays, hence the figurative appellation `rotations'. 
'Symplectic' means that when applied to $(q_1,p_1)$,$(q_2,p_2)$ they leave the symplectic product $(q_2p_1-q_1p_2)$ unchanged.
\end{itemize}  
The semi-direct product $\SL(2,\mathbb{Z}_N)\ltimes T(q_0,p_0)$ 
will be of interest here. 

\item\emph{Heisenberg-Weyl group}:
Returning to the Hilbert space we define the operators $X$ and $Z$ on $\mathcal{H}_N$ through their actions on the coordinate basis 
$|q\rangle  $ (momentum basis $|p)$) as :
\begin{eqnarray} 
X|q\rangle = |q+1\rangle && \hspace{5mm} (~~ X|p) = \omega^{-p}|p)~~),\\ 
Z|q\rangle = \omega^{q}|q\rangle   &&\hspace{5mm} (~~ Z|p) = |p+1)~~).
\end{eqnarray}
Consider the set of $N^2$ operators
\begin{equation} D(q,p) = \tau^{qp} X^qZ^p \ , \end{equation}
where $\tau=-e^{i\pi/N}$. These operators, called displacement operators, obey 
\begin{eqnarray}
D(q,p)^\dagger&=&D(-q,-p),\\
D(q,p)D(q^\prime,p^\prime) &=& \tau^{(q^\prime p - qp^\prime)}D(q+q^\prime, p+p^\prime),\\
Tr[D(q,p)]&=&N\delta_{q,0}\delta_{p,0},\\
Tr[D^\dagger(q,p)D(q^\prime,p^\prime)]&=&N\delta_{q,q^\prime}\delta_{p,p^\prime}.
\end{eqnarray}
  
\noindent This set of unitary operators forms an orthonormal basis in the space 
of all complex operators on $\mathcal{H}_N$ and furnishes a projective representation of the group of translations on the phase space. In the standard basis, they can be represented by 
matrices as
\begin{equation} [D(q,p)]_{\ell k} = \tau^{qp}\omega^{kp}\delta_{\ell,k+q} \ . \end{equation}
In particular, 
\begin{equation} [D(0,1)]_{\ell k}=Z = \omega^{kp}\delta_{\ell,k}; \ [D(1,0)]_{\ell k}=X = \delta_{\ell,k+1} \ . \end{equation}
Further $\{e^{i\phi}D(q,p)\}$ constitute the elements of the Weyl-Heisenberg group 
(or the Pauli Group).
 
\item\emph{The Clifford group}:
\noindent The Clifford group consists of all unitary matrices 
$U$ such that 
\begin{equation} UD(q,p)U^\dagger = e^{i\theta_{q,p}}D(f(q,p), g(q,p)), \ . \end{equation}
for suitable phases $e^{\theta_{q,p}}$ and functions $f,g$ It is a semi-direct product of the Weyl-Heisenberg group (playing the role of translations) with the group 
of $N$ dimensional unitary matrices $U(F)$ labelled by elements
\begin{equation}
F=\left( \begin{array}{cc} \alpha & \beta \\ \gamma & \delta \end{array} \right);~~ {\rm det}F=1~~ {\rm mod}~ N,
\end{equation}
of $\SL(2,\mathbb{Z}_N)$. The unitaries $U(F)$, defined up to a phase, act on $D(q,p)$
as follows: 
\begin{equation} U(F)D(q,p)U^\dagger(F) = D(\alpha q+\beta p, \gamma q+\delta p) , 
\label{16}
\end{equation} 
and, up to a phase, are explicitly given by \cite{AF}
\begin{equation}
U(F)= \begin{cases} 
\frac{1}{\sqrt{N}}\sum_{j,k}\tau^{\beta^{-1}(\alpha k^2-2jk+\delta j^2)}|j\rangle \langle k| \qquad & \beta\neq 0,\\
\sum_{j}\tau^{\alpha \gamma j^2}|\alpha j\rangle \langle j|\qquad & \beta= 0.
\end{cases}
\label{16a}
\end{equation}
We shall denote the unitaries corresponding to $g_1$ and $g_2$ by  $V$ and $W$. They generate all the $U$'s.
\item\emph{Mutually Unbiased Bases (MUB)}:
From the properties of the displacement operators listed above it is evident that two such operators 
$D(q,p),D(q^\prime,p^\prime)$ commute with each other if and only if the `symplectic product'
$(q^\prime p-q p^\prime)$ of $(q,p)$ and $(q^\prime,p^\prime)$ vanishes. This in turn implies that the operators $D(q,p)$ with $(q,p)$ lying on a ray commute with each other and hence their simultaneous eigenvectors furnish an orthonormal basis. Moreover, the orthonormal bases corresponding to different rays can be shown to be mutually unbiased with respect to each other \cite{5}. Since there are $N+1$ rays we are naturally led to the full set of $N+1$ mutually unbiased bases. This association between rays and MUBs is a key ingredient in the construction of generalised Wigner distributions as we shall see later. As to explicit expressions for the MUBs there are many choices depending on the choice of phases for the vectors. 
We find the following choice, with the MUB's arranged in a specific order, to be 
convenient \cite{AF}:  
$$M_0, M_1, \cdots, M_N,$$
with
 \begin{equation} M_m  =\begin{cases} \{|m,r\rangle  \colon r=0,\cdots,N-1\}\qquad & m=0,\cdots,N-1\\
  \{|\infty,r\rangle  \colon r=0,\cdots,N-1\} \qquad & m=N
  \end{cases}  
\end{equation}
and
\begin{equation}
|m,r\rangle   = \begin{cases}
V^m|r\rangle  \qquad & m=0,\cdots,N-1\\  W|r\rangle  \qquad & m = \infty
\end{cases}
\end{equation}
where $V$ and $W$ denote unitaries corresponding to $g_1$ and $g_2$ respectively.
For the sets obtained by replacing each vector in the ordered sets  $ M_0, M_1, \cdots, M_N$ by its projector we will use the notation  
$$
\mathcal{M}_0, \mathcal{M}_1, \cdots, \mathcal{M}_N.$$
Explicit expressions for the projectors are given below \cite{AF}:
\begin{equation}
|m,r\rangle \langle m,r|=
\begin{cases}\frac{1}{N}\sum_{j}\omega^{-rj} D(mj,j)\qquad &m\neq \infty,
\\
\frac{1}{N}\sum_{j}\omega^{-rj}D(j,0) \qquad &m=\infty.
\end{cases}
\label{21}
\end{equation}The MUB's at the extremes can clearly be identified as those associated with vertical and horizontal rays respectively. Again, as with striations, the vectors within each MUB, except for the last one, are related to each other (up to phases) by the action of $D(1,0)$. Those in the last one are related to each other (up to phases) by $D(0,1)$: 
\begin{align}
D(1,0)\colon|m,r\rangle & \to  \begin{cases}|m,r+1\rangle \qquad & m\neq \infty
\\
|m,r\rangle \qquad & m =\infty,
\end{cases}
\label{22}
\\
D(0,1)\colon |m,r\rangle & \to
\begin{cases}|m,r-m\rangle,\qquad & m\neq \infty,
\\
 |m,r-1\rangle \qquad & m = \infty
\end{cases}
\label{23}
\end{align}
 
\item\emph{Quantum Nets}: We have seen that each ray has a MUB associated with it. We have also seen that the passage from one to another member of the striation by translation, has a complete correspondence with the passage from one vector in the MUB to another through appropriate displacement operators. We now assign a specific vector (more precisely the corresponding projector) in the MUB to each ray and assign other vectors in the MUB to the other parallel lines parallel to the ray using the correspondence mentioned above. This assignment of vectors to lines in the phase space consistent with translations leads to a Quantum Net. Clearly, as there are $N+1$ rays and for each ray $N$ vectors to choose from the corresponding MUB's, there are  $N^{N+1}$ different ways of assigning vectors to the lines in phase space and hence $N^{N+1}$ distinct quantum nets. Now given a quantum net we can generate another unitarily equivalent by applying one of the $N^2$ displacement operators to all the vectors. This then means that we can divide $N^{N+1}$ quantum nets into $N^{N-1}$ equivalence classes containing   $N^2$ nets each. 
\item\emph{Phase Point Operators}: Having assigned a vector to each line in the phase space we are now in a position to assign a Hermitian operator $\hat{A}(q,p)$ to each point $(q,p)$ in the phase space by adding up all the projectors $P_\lambda$ associated with the $N+1$ lines $\lambda$ passing through that point and subtracting the Identity operator from the sum to make its trace equal 1:
\begin{equation}
\widehat{A}(q,p) = \sum_{\lambda} P_\lambda - \mathbb{I}~~~~;~~~~ {\rm Tr}\widehat{A}(q,p)=1.
\end{equation}
One can invert this relation to obtain the projectors associated with a line $\lambda$ in terms of the phase point operators \begin{equation}
P_\lambda = \frac{1}{N}\sum_{(q,p)\in \lambda}\widehat{A}(q,p).
\end{equation}
Further using the mutually unbiasedness of the projectors one obtains:
\begin{equation}
{\rm Tr}[\widehat{A}(q,p)\widehat{A}(q^\prime,p^\prime)] = N\delta_{q,q^\prime}\delta_{p,p^\prime}.
\end{equation}
From the above discussion it is clear that the totality $\{\widehat{A}\}$ of $N^{N+1}$ phase point operators are obtained by picking one vector from each of the MUBs in all possible ways.

\item\emph{Affine Planes}: Given the phase point operator at say $\widehat{A}(0,0)$ we can generate $N^2$ operators $\widehat{A}(q,p)$ by applying all the $N^2$ displacement operators to it. Such a collection is taken to define an affine plane.  We thus have the following correspondences between the structures at the phase space level and those at the level of the Hilbert space:
\begin{eqnarray}
{\rm Points}~~ (q,p)&\longleftrightarrow& {\rm Phase ~point ~operators}~~\widehat{A}(q,p),\nonumber\\
{\rm Lines}~~ \lambda &\longleftrightarrow& {\rm Projectors},\\
{\rm Phase~ Space}~~  &\longleftrightarrow& {\rm Affine~ Plane}.  \nonumber
\end{eqnarray}
The full set of $N^{N-1}$ affine planes can thus be obtained by dividing $N^{N+1}$ $\widehat{A}$'s into subsets, related to each other by 
the Heisenberg-Weyl group.

\item\emph{Wigner Distributions}:
 With the phase point operators at hand one can define Wigner distribution corresponding to a given density operator $\rho$ as 
\begin{equation}
W_{\rho}(q,p)={\rm Tr}[\rho\widehat{A}(q,p)].
\end{equation}   
which, among others, have the desired marginals property: 
\begin{equation}
 \frac{1}{N}\sum_{(q,p)\in \lambda}W_{\rho}(q,p)={\rm Tr}[\rho P_\lambda].
\end{equation} 
i.e. a Wigner distribution averaged over a line gives a probability.
\end{list}
\noindent
\section{A convenient labelling scheme for phase point operators and affine planes}
As discussed earlier, a selection of projectors, one from each $ \mathcal{M}_0, \mathcal{M}_1, \cdots, \mathcal{M}_N$, defines a phase point operator $\widehat{A}$. We can therefore label a phase point operator uniquely 
by an array $(r_0,r_1,\cdots,r_N)$ :
$$\widehat{A}\leftrightarrow (r_0,r_1,\cdots,r_N).$$
This permits us to think of each phase point operator as a vector in $V_{N+1}(\mathbb{Z}_N)$, an $N+1$ dimensional vector space over $Z_N$. We now examine how they transform under the Clifford group. To this end we look at the action of the generators   $D(1,0)$ and $D(0,1)$ of the Heisenberg-Weyl group 
as well as the action of the symplectic group.

From $(\ref{22})$ and $(\ref{23})$ it follows that : 
\begin{eqnarray} 
D(1,0) &:& (r_0,r_1,\cdots,r_N) \rightarrow (r_0,r_1,\cdots,r_N) + (1,1,\cdots,1,0), \\
 D(0,1) &:&(r_0,r_1,\cdots,r_N) \rightarrow (r_0,r_1,\cdots,r_N) - (0,1,\cdots,N-1,1).
\end{eqnarray}

Further, using $(\ref{16})$ and $(\ref{21})$ it is straightforward (though somewhat tedious) to confirm that an arbitrary $U(F)$ given by $(\ref{16a})$ acts on $|m,r\rangle  \langle m,r|$ as follows: 
\begin{equation}
U(F) \colon |m,r\rangle\langle m,r|  \to
\begin{cases}
 \left|\frac{\alpha m+\beta}{\gamma m +\delta},\frac{r}{\gamma m +\delta}\right\rangle  \left\langle \frac{\alpha m+\beta}{\gamma m +\delta},\frac{r}{\gamma m +\delta}\right| \quad &\text{$m\neq \infty$, $\gamma m +\delta\neq 0$}
\\ \vphantom{\Bigl<}
|\infty,-\gamma r\rangle\langle \infty,-\gamma r| \quad & \text{$m\neq \infty$, $\gamma m+\delta = 0$}
\\
\left|\frac{\alpha}{\gamma},\frac{r}{\gamma}\right\rangle  \left\langle \frac{\alpha}{\gamma},\frac{r}{\gamma}\right| \quad & \text{$m=\infty$, $\gamma\neq 0$}
\\ \vphantom{\Bigl<}
|\infty,\delta r \rangle \langle \infty,\delta r| \quad &\text{$m=\infty$, $\gamma = 0$}.
\end{cases}
\label{e32} 
\end{equation}
On $r_m$ these translate into
\begin{equation}
r_m \to
\begin{cases}
-(\gamma m-\alpha)r_{\frac{\beta-\delta m}{\gamma m-\alpha}} \qquad & \text{$m\neq N$, $\gamma m-\alpha \neq 0$}
\\
\frac{1}{\gamma}r_N \qquad & \text{$m\neq N$, $\gamma m-\alpha =0$}
\\
 -\gamma r_{\frac{-\delta}{\gamma}} \qquad &  \text{$m= N$, $\gamma\neq 0$}
 \\
\frac{1}{\alpha}r_N \qquad &  \text{$m= N$, $\gamma = 0$}
\end{cases}
\end{equation}
On the set of MUBs the action is that of a M\"obius transformation. In fact the set of MUBs 
can be regarded as a projective line at infinity, added to the affine plane.

These actions can be written out as matrices acting on $\mathbf{r}\equiv (r_0,\cdots,r_N)$. Thus, for instance, for $N=3$ we find for the generators of $\SL(2, \mathbb{Z}_3)$  
\begin{align}
V\colon \mathbf{r} \rightarrow \mathbf{r}^\prime &=\mathcal{U}(g_1)\mathbf{r},\\
W\colon \mathbf{r}\rightarrow \mathbf{r}^\prime &=\mathcal{U}(g_2)\mathbf{r},
\end{align}
where
\begin{equation}
\mathcal{U}(g_1)=\begin{pmatrix}0&0&1&0\\1&0&0&0\\0&1&0&0\\0&0&0&1\end{pmatrix}; \qquad
\mathcal{U}(g_2)=\begin{pmatrix}0&0&0&2\\0&0&1&0\\0&2&0&0\\1&0&0&0\end{pmatrix}.
\end{equation}

\noindent Note that these matrices have determinant equal to one and hence belong to 
the group $\SL(N+1,\mathbb{Z}_N)$. 

Next we develop a similar notation for the affine planes.
As discussed earlier, the set of phase point operators obtained by applying all elements of the Weyl-Heisenberg group to a fixed phase point operator defines an affine plane containing that phase point operator. This operation decomposes the set of $N^{N+1}$ phase point operator into $N^{N-1}$ disjoint subsets---the affine planes, containing $N^2$ points each. To 
develop a useful representation for the affine planes as cosets in  $V_{N+1}(\mathbb{Z}_N)$ we 
notice that the action of the Weyl-Heisenberg group on the phase point operators consists in adding vectors 
which are linear combinations of $e_0$ and $e_1$, 
where
\begin{eqnarray}
e_0&=& (1,1,\cdots,1,0), \\
e_{1}&=& (0,1,2,\cdots, N-1,1).
\end{eqnarray}
Further, under the action of $V$ and $W$ these transform into each other:
\begin{eqnarray} 
V &:&  e_0\rightarrow e_0,~~ e_{1} \rightarrow e_1-e_0,  \\
W &:&  e_0\rightarrow e_1,~~ e_{1} \rightarrow -e_0. 
\end{eqnarray}

It therefore proves convenient to choose a basis which contains these two. Any choice would do. We choose the remaining ones as:
\begin{eqnarray}
e_k&=& (0,1^k,2^k,\cdots,(N-1)^k,0);~~k=2,\cdots, N-1,  \nonumber
\\
e_{N}&=& (0,0,0,\cdots,0,1).
\end{eqnarray}
That this set of vectors is indeed a linearly independent set can easily be checked
using the properties of Van der Monde determinants. 
We shall denote the components of a vector $(r_0,r_1,\cdots,r_N)$ in the e-basis as 
 $[\alpha_0,\alpha_1,\cdots,\alpha_N]$ and the invertible matrix relating the two by $S$:
\begin{equation}
{\bf r}=S{ \boldsymbol{\alpha}}; \ S=(e_{0}^T, e_{1}^T,\cdots,e_{N}^T).
\label{S}
\end{equation} 
Under the action of the Weyl-Heisenberg group, the components in the e-basis have rather simple properties 
\begin{eqnarray} D(1,0) &:& [\alpha_0,\alpha_1,\cdots,\alpha_N] \rightarrow [\alpha_0+1,\alpha_1,\cdots,\cdots,\alpha_N], \label{65} \\
 D(0,1) &:& [\alpha_0,\alpha_1,\cdots,\alpha_N]\rightarrow [\alpha_0,\alpha_1-1,\cdots,\alpha_N].\label{66}  
\end{eqnarray}

\noindent Thus the collection $[\alpha_0,\alpha_1,\cdots,\alpha_N]$, with $\alpha_0$ 
and $\alpha_1$ taking all values in $\mathbb{Z}_N$ with the rest fixed, defines an 
affine plane. We label it by the $N-2$ coordinates 
$[\alpha_2,\cdots,\alpha_{N-1},\alpha_N]$. 
 
The action of $V$ and $W$ is 
\begin{align}
V\colon \boldsymbol{\alpha}\rightarrow \boldsymbol{\alpha}^\prime &=S^{-1}\mathcal{U}(g_1){S}\boldsymbol{\alpha},\\
W\colon \boldsymbol{\alpha}\rightarrow \boldsymbol{\alpha}^\prime &=S^{-1}\mathcal{U}(g_2)S\boldsymbol{\alpha}.
\end{align}
where the matrix $S$ is given in $(\ref{S})$.

If one is only interested in the actions of $V$ and $W$ on the affine planes then it suffices to look at the actions of the matrices $S^{-1}\mathcal{U}(g_1)S$ and $S^{-1}\mathcal{U}(g_2)S$ on the $N-1$ $\alpha$'s omitting $\alpha_0$ and $\alpha_1$. 
These matrices have determinant equal to one and hence belong to the group $\SL(N-1,\mathbb{Z}_N)$. 

To summarise, we have a representation of the $\SL(2,\mathbb{Z}_N)$ action on the phase space through matrices belonging to $\SL(N+1,\mathbb{Z}_N)$ ( $\SL(N-1,\mathbb{Z}_N)$ ) acting on  an $N+1$ ($N-1$) dimensional vector space over $\mathbb{Z}_N$ whose elements are in one to one correspondence with the phase point operators (affine planes). A natural question to ask is whether or not one has an analogue of the symplectic form left invariant by $\SL(2,\mathbb{Z}_N)$ action on the phase space. 
To pursue this question we look for bilinear quadratic forms of the type $\Omega({\bf r}^{\prime},{\bf r})\equiv{\bf r}^{\prime T}~\Omega~{\bf r}$, left invariant by the $U(g_1), U(g_2)$ action on the ${\bf r}$'s, i.e we look for matrices $\Omega$ such that
\begin{equation}
\mathcal{U}^T (g_1)\Omega \mathcal{U} (g_1)=\Omega;~~ \mathcal{U}^T (g_2)\Omega \mathcal{U} (g_2)=\Omega.
\end{equation} 
The antisymmetric matrix $\Omega$ turns out to be 
\begin{equation}
\Omega=  \left( \begin{array}{cc} {\bf q} & \mathbb{I} \\ -\mathbb{I}^T & 0 \end{array} \right),  
\end{equation}
where $\mathbb{I}$ denotes an $N$-dimensional column with all entries equal to $1$ and 
${\bf q}$ is an $N\times N$ antisymmetric matrix with non-vanishing matrix elements
\begin{equation}
q_{ij}= \frac{1}{j-i}.
\end{equation}
In the e-basis, the symplectic form $\Omega$ has a very simple structure:
\begin{equation}
\Omega(e_i,e_j)=\begin{cases} 
0 \qquad & \text{$i$ or $j=0$ or $1$},\\
             j\delta_{i+j,0} \qquad & 2\leq i,j\leq N-2,\\
           -1 \qquad & \text{ $i=N-1$ and  $j=N$},
\end{cases} 
\end{equation}
which in turn suggest a basis in which $\Omega$ has the canonical form. Thus,
 for instance, for $N=5$ the basis in which $\Omega$ has the canonical form
\begin{equation}
  \left( \begin{array}{cc} {\bf 0} & \mathbb{I} \\ -\mathbb{I} & {\bf 0}\end{array} \right),  
\end{equation}
is $$e_2,e_4,\frac{1}{3}e_3,-e_5 \ .$$
Similarly for $N=7$ we have  $$e_2,e_4,e_6,\frac{1}{5}e_5,\frac{1}{3}e_3,-e_7 \ .$$

\section{Orbits under the Clifford Group}

Having learnt how the generators of the Clifford group act on phase point operators and affine planes, our next task is to 
investigate how they arrange themselves into orbits under the action of this group.   As is well known, to generate an orbit of a group $\mathcal{G}$ acting on a set  we need to pick an element of the set and apply all elements of the group to it. We then pick another element in the set not contained in the earlier set and generate its orbit and so on until all elements of the set are exhausted. The quantities of interest are then (a) the number of orbits (b) the size of each orbit. The total number of orbits can be calculated using the Burnside Lemma:
\begin{equation}
{\rm Number~ of~ orbits}=\frac{1}{|\mathcal{G}|}\sum_{g\in \mathcal{G}} \phi(g),
\end{equation}
where $|\mathcal{G}|$ denotes the order of the group and $\phi(g)$ the number of points of the set left fixed by the action of $g$. As $g$ and the elements related to it by conjugation have the same fixed points, to calculate the number of orbits we need to know the number of elements in each conjugacy class and the number of  points left fixed by a representative in each class. 
As to the size of the orbits, it is given by the ratio of the order of the group to the order of the stability group of the starting element of the set from which the orbit is built up by group action. Knowing the order of the subgroups, one can then deduce the possible sizes of the orbits. In summary, to answer the questions pertaining to the orbits, we need 
\begin{enumerate}
\item knowledge of the conjugacy classes and the number of elements therein 
\item expressions for the class representatives in terms of the generators 
\item orders of the cyclic subgroups generated by of elements of a conjugacy class. 
\end{enumerate}
In the following we collect together some relevant facts from the literature 
\cite{12} concerning these aspects for the group $\SL(2,\mathbb{Z}_N)$ (also see the Appendix). 

 In our discussion we will need to distinguish the sets 
\begin{align}
Q &= \{x \in \mathbb{Z}^{*}_N:  x = y^2 \text{ for some } y\in\mathbb{Z}^{*}_N\} \\
\bar{Q} & = \{x\in\mathbb{Z}^{*}_N: x \notin Q\}
\end{align}
where $\mathbb{Z}^{*}_N$ is the set of non-zero elements of $\mathbb{Z}_N$ (so $Q$ is the set of quadratic residues and $\bar{Q}$ is the set of non-quadratic residues~\cite{Hardy}).

It is immediate that elements of $\SL(2,\mathbb{Z}_N)$ having different trace must belong to different conjugacy classes.  It turns out~\cite{12} that there is in fact exactly one conjugacy class for each value of the trace, except when the trace $=\pm 2$ in which case there are three.  This gives us $N+4$ conjugacy classes:  the $N$ ``standard'' classes
\begin{equation}
C_{t} = \left[ \begin{pmatrix} 0 & -1 \\ 1 & t\end{pmatrix} \right]
\end{equation}
with $t=0,1,\dots (N-1)$ and the four additional classes
\begin{equation}
\bar{C}_{\pm 2}= \left[ \begin{pmatrix} 0 & -\frac{1}{\nu}  \\ \nu & \pm 2 \end{pmatrix} \right] \qquad \text{and} \qquad D_{\pm 2} = \left[ \begin{pmatrix} \pm 1 & 0  \\ 0 & \pm 1 \end{pmatrix} \right]
\end{equation}
where $\nu$ is any fixed element of $\bar{Q}$.  In these expressions $[F]$ denotes the conjugacy class containing $F$.  For more details see Theorem~\ref{thm:SLConjClass} in the Appendix.

For the standard classes we have
\begin{equation}
|C_t| = 
\begin{cases}
N(N+1) \qquad & \text{if $t^2-4 \in Q$}\\
N(N-1) \qquad & \text{if $t^2-4 \in \bar{Q}$}\\
\frac{1}{2} (N^2-1)\qquad & \text{if $t^2-4 = 0$}
\end{cases}
\end{equation}
where the notation $|S|$ means ``number of elements in the set $S$''.   For the additional classes we have
 $|\bar{C}_{\pm 2}| = (N^2-1)/2$ and $|D_{\pm 2}| =1$.  It may be worth noting that the classes $C_t$ for which $t^2-4\in Q$ are precisely the ones whose elements are diagonalizable (apart from the two classes $D_{\pm 2}$).

For trace $= 0, \pm 1, \pm 2$  it is easy to calculate the orders of the cyclic subgroups generated by the elements in each conjugacy class for arbitrary $N$.   They are listed in Table~\ref{tb:genOrd} 
\begin{table}[h]
\begin{center}
\renewcommand{\arraystretch}{1.3}
\begin{tabular}{| c | c | c| c | c |c |c | c | c | c  | }\hline
\te{2}{class} & 
  \te{0.7}{$C_0$ }&
  \te{0.7}{$C_1$ }&
  \te{0.7}{$C_{-1}$ }&
  \te{0.7}{$C_2$}&  
  \te{0.7}{$\bar{C}_2$ }& 
  \te{0.7}{$D_2$ }&  
  \te{0.7}{$C_{-2}$}  &  
  \te{0.7}{$\bar{C}_{-2}$} &  
  \te{0.7}{$D_{-2}$} 
  \\ \hline
\te{2}{order }& 
  \te{0.7}{$4$ }&
  \te{0.7}{$6$ }&
  \te{0.7}{$3$ }&
 \te{0.7}{$N$} &  
 \te{0.7}{$N$} &    
 \te{0.7}{$1$} &    
 \te{0.7}{$2 N$ }&    
 \te{0.7}{$2 N$ }&    
 \te{0.7}{$2 $ } \\
\hline
\end{tabular}
\vspace{0.2 cm}
\caption{The order of the cyclic subgroups generated by elements of some special 
conjugacy classes, for all $N$.}
\label{tb:genOrd}
\end{center}
\end{table}
(to derive the result for elements of $C_{\pm 2}$, $\bar{C}_{\pm 2}$ consider matrices of the form $\left(\begin{smallmatrix} \pm 1 & k \\0& \pm 1\end{smallmatrix}\right)$).
For the remaining conjugacy classes the orders of the cyclic subgroups depend
on number theoretical details. We have worked out the orders for $N \leq 19$.  They are listed in Table~\ref{tb:ordNLs19}.
\begin{table}[h]
\begin{center}
\renewcommand{\arraystretch}{1.3}
\begin{tabular}{| c| c| c| c| c| c| c| c| c| c| c| c| c| c| c|} \hline 
&\te{   0.43}{$ \te{   0.43}{$C_3$} $}&\te{   0.43}{$ C_4 $}&\te{   0.43}{$ C_5$}&\te{   0.43}{$C_6$}&\te{   0.43}{$C_7$}&\te{   0.43}{$C_8$}&\te{   0.43}{$C_9$}&\te{   0.43}{$C_{10}$}&\te{   0.43}{$C_{11}$}&\te{   0.43}{$C_{12}$}&\te{   0.43}{$C_{13}$}&\te{   0.43}{$C_{14}$}&\te{   0.43}{$C_{15}$}&\te{   0.43}{$C_{16}$}
\\ 
\hline 
\te{1.2}{$N=7 $}&\te{   0.43}{$ 8 $}&\te{   0.43}{$ 8 $}&
&& && && && && &
\\
\hline 
\te{1.2}{$N=11 $}&\te{   0.43}{$ 5 $}&\te{   0.43}{$ 10 $}&\te{   0.43}{$12 $}&\te{   0.43}{$12 $}&\te{   0.43}{$5 $}&\te{   0.43}{$10 $}&
&& && && &
\\ 
\hline 
\te{1.2}{$N=13 $}&\te{   0.43}{$ 14 $}&\te{   0.43}{$ 12 $}&\te{   0.43}{$14 $}&\te{   0.43}{$14 $}&\te{   0.43}{$7 $}&\te{   0.43}{$7 $}&\te{   0.43}{$12 $}&\te{   0.43}{$7 $}&
&& && &
\\
 \hline 
\te{1.2}{$N=17 $}&\te{   0.43}{$ 18 $}&\te{   0.43}{$18 $}&\te{   0.43}{$ 16$}&\te{   0.43}{$ 8$}&\te{   0.43}{$ 9$}&\te{   0.43}{$ 16$}&\te{   0.43}{$16 $}&\te{   0.43}{$18 $}&\te{   0.43}{$8 $}&\te{   0.43}{$16 $}&\te{   0.43}{$9 $}&\te{   0.43}{$ 9$}&
&
\\ 
\hline 
\te{1.2}{$N=19 $}&\te{   0.43}{$ 9 $}&\te{   0.43}{$ 5 $}&\te{   0.43}{$10 $}&\te{   0.43}{$20 $}&\te{   0.43}{$9 $}&\te{   0.43}{$20 $}&\te{   0.43}{$9 $}&\te{   0.43}{$18 $}&\te{   0.43}{$20 $}&\te{   0.43}{$18 $}&\te{   0.43}{$20 $}&\te{   0.43}{$ 5 $}&\te{   0.43}{$10 $}&\te{   0.43}{$18$}
\\ 
\hline 
\end{tabular}
\vspace{0.2 cm}
\caption{The order of the cyclic subgroups generated by elements of all the conjugacy 
classes, for $N \leq 19$.}
\label{tb:ordNLs19}
\end{center}
\end{table}
Since we know how many elements there are in each conjugacy class, the number of distinct cyclic subgroups can easily be worked out. 
In particular we find that there is a unique subgroup of order $2$, generated
by the element $- I \in D_{-2}$;  $N+1$ (Sylow) subgroups of order $N$, generated by 
elements of $C_{2}$, $\bar{C}_{2}$; and $N+1$ subgroups of order $2N$ generated by elements 
of $C_{-2}$, $\bar{C}_{-2}$.

We can now compute 
quantities of interest concerning the orbits generated by Clifford group action on phase point operators and affine planes.   Using the results in the previous section we have done this for   $N=3,5, 7$.  The results are listed in Table~\ref{tb:fixedEtc}.
\begin{table}
\begin{center} 
\begin{tabular}{| c| c| c| c| c| c| c|} \hline
\te{1.4}{N} & \te{1.4}{Class} & \te{1.4}{number of elements} & \te{1.4}{order of cyclic subgroup} & \te{1.4}{fixed points in plane} & \te{1.4}{fixed $A$s}& \te{1.4}{fixed planes} 
\\ \hline 
$3$ & $C_0$ & 6 & 4 & 1 & 1 & 1 \\
& $C_{1}$ & 4 & 6 & 1 & 1 & 1 \\
& $\bar{C}_{1}$ & 4 & 6 & 1 & 1 & 1 \\ 
& $D_{1}$ & 1 & 2 & 1 & 1 & 1 \\  
& $C_2$ & 4 & 3 & 3 & 9 & 3 \\
& $\bar{C}_2$ & 4 & 3 & 3 & 9 & 3 \\ 
 & $D_2$ & $1$ & $1$ & $9$ & $81$ & $9$ \\ 
\hline 
5 & $C_{0}$ & 30 & 4 & 1 & 1 & 1 \\ 
& $C_{1}$ & 20 & 6 & 1 & 1 & 1 \\ 
 & $C_2$ & 12 & 5 & 5 & 25 & 5 \\ 
& $\bar{C}_2$ & 12 & 5 & 5 & 25 & 5 \\ 
 & $D_2$ & 1 & 1 & 25 & 15625 & 625 \\ 
& $C_{3}$ & 12 & 10 & 1 & 1 & 1 \\
& $\bar{C}_{3}$ & 12 & 10 & 1 & 1 & 1 \\
& $D_{3}$ & 1 & 2 & 1 & 1 & 1 \\ 
& $C_{4}$ & 20 & 3 & 1 & 25 & 25 \\ 
\hline 
7 & $C_{0}$ & 42 & 4 & 1 & 1 & 1 \\ 
& $C_{1}$ & 56 & 6 & 1 & 1 & 1 \\ 
& $C_2$ & 24 & 7 & 7 & $7^2$ & 7 \\ 
& $\bar{C}_2$ & 24 & 7 & 7 & $7^2$ & 7 \\ 
 & $D_2$ & 1 & 1 & $7^2$ & $7^8$ & $7^6$ \\ 
& $C_{3}$ & 42 & 8 & 1 & 1 & 1 \\ 
& $C_4$ & 42 & 8 & 1 & 1 & 1 \\ 
& $C_{5}$ & 24 & 14 & 1 & 1 & 1 \\
& $\bar{C}_{5}$ & 24 & 14 & 1 & 1 & 1 \\ 
& $D_{5}$ & 1 & 2 & 1 & 1 & 1 \\
& $C_6$ & 56 & 3 & 1 & $7^2$ & $7^2$ \\ \hline
\end{tabular}
\vspace{0.2 cm}
\caption{Fixed points under the action of the elements in $SL(2, Z_N)$, $N \leq 7$. 
The action can be on an affine plane, on the set of phase point operators, and on 
the set of affine planes.}
\label{tb:fixedEtc}
\end{center}
\end{table}
Some things can be said in general. Notably cyclic subgroups of even order have only one 
fixed point. It appears that cyclic subgroups of an odd order $m$ always leave exactly 
$N^k$ affine planes fixed, where $N + 1 = km + m'$, $m' <  m$. (We proved this for subgroups 
of order 3, and checked it in all cases for $N \leq 19$.)

On the affine plane the number of orbits is always $2$. On the set of affine planes the number of orbits can be found using Burnside's lemma.  The results for $N\le 11$ are listed in Table~\ref{tb:noOrbits}. 
\begin{table}[h]
\begin{center}
\renewcommand{\arraystretch}{1.3}
\begin{tabular}{|c|c|c|c|c|} \hline 
\te{2}{$N$} & \te{1.5}{$3$} & \te{1.5}{$5$} &\te{1.5}{$7$}&\te{1.5}{$11$}\\ \hline 
No.\ of orbits & $2$ & $11$ &$360$& $19650810$\\ \hline
\end{tabular}
\vspace{0.2 cm}
\caption{Number of orbits of $SL(2, Z_N)$ when acting on the set of affine planes.}
\label{tb:noOrbits}
\end{center}
\end{table}
They should be compared with the estimate $N^{N-1}/N^3$ which for $N=5,7,11$ gives
$5,343,19487171$ respectively. 
The estimate is rather good. 
For $N=3$ the sizes of the orbits are $1+8=3^2$, and for $N = 5$ they are $1+24+40+40+40+40+40+40+120+120+120=5^4$. For $N=7$ the results are too numerous to be given here. As expected, each term in the sum is a divisor of the order $N(N^2-1)$ of the group. 
We observe that for all $N$ there is a unique singlet, and a unique $(N^2-1)$-plet consisting 
of affine planes left invariant by some Sylow subgroup..

\section{Spectra of Phase Point Operators, Complex Conjugation and the Extended Clifford Group}
We now turn to the spectra of the phase point operators. Obviously those in the same affine plane have the same spectra and hence we need to focus only on the affine planes. For 
$N=3$ one finds two distinct spectra: $(1,1,-1)$ and $(\Phi,1-\Phi,0)$ where $\Phi=
(1+\sqrt{5})/2=1.61803..$, the Golden Ratio. No surprises here. 

For $N=5$ one finds only 9 distinct spectra as against the naive expectation of 11, as listed in Table~\ref{tb:spectra}.
\begin{table}[h]
\begin{center}
\renewcommand{\arraystretch}{1.4}
\begin{tabular}{|c|c|}
\hline
spectrum & no. of occurrences \\
\hline
$ \{-1.,-1.,1.,1.,1.\} $ & 1  \\ \hline
$ \{-1.,-0.61803,0,1.,1.61803\}$ & 24  \\ \hline
 $\{-0.94658,-0.5169,-0.18438,0.93842,1.70944\}$ & 120  \\ \hline
 $\{-0.90932,-0.48701,0,0.46853,1.9278\}$ & 120  \\ \hline
 $\{-0.90039,-0.64018,-0.14531,1.06785,1.61803\}$ & 40  \\ \hline
 $\{-0.83726,-0.58152,-0.09576,0.6287,1.88584\}$ & 120  \\ \hline
 $\{-0.83607,-0.81,0,1.05469,1.59139\}$ & 80  \\ \hline
 $\{-0.79859,-0.36221,0,0.10661,2.05419\}$ & 80  \\ \hline
 $\{-0.70281,-0.61803,-0.13294,0.48666,1.96712\}$ & 40 \\ \hline
\end{tabular}
\vspace{0.2 cm}
\caption{Possible spectra of the phase point operators in an affine plane, $N = 5$.}
\label{tb:spectra}
\end{center}
\end{table}
For $N=7$ one gets 210 distinct spectra (too numerous to reproduce here). In both cases one finds that sizes of the spectral orbits do not divide the order of the group 
but rather twice the order. 
This leads one to suspect that as far as the spectra are concerned, the relevant group is not the Clifford group but rather an augmented Clifford group.  
The extra symmetry is provided by the operation of complex conjugation which also maps the MUB's into themselves. The Extended Clifford Group~\cite{10}
includes the complex conjugation operation and has twice as many elements as the Clifford group. This is consistent with the sizes of the spectral orbits found. With this in mind we investigate and encode the action of complex conjugation on the MUB's, so as to have explicit actions of all the operations on the MUB's relevant for discussing the spectra of the phase point operators which in turn enable us to classify 
distinct Wigner distributions.

Complex conjugation, denoted by $C$, is represented by an anti-unitary operator. Under $C$ 
the MUB's transform into each other 
as follows: 
\begin{align}
C\colon |m,r\rangle & \to
\begin{cases}
|-m, r\rangle \qquad & m \neq \infty \\
|\infty, -r\rangle \qquad & m = \infty
\end{cases}
\\
\intertext{which implies}
C\colon r_m & \to
\begin{cases}
r_{-m} \qquad & m = 0, 1, \dots, N-1 \\
-r_N \qquad & m = N
\end{cases}
\\
\intertext{and}
C\colon e_k & \to (-1)^k e_k \qquad k = 0, 1, \dots, N
\end{align}
These actions, when translated into actions on the $\alpha$'s, have 
the pleasant feature that under complex conjugation the $\alpha$'s alternately change signs: 
\begin{equation}
{\rm C}:[\alpha_0,\alpha_1,\cdots,\alpha_N]\rightarrow
[\alpha_0,-\alpha_1,\alpha_2,\cdots,-\alpha_N].
\end{equation}
Thus complex conjugation acts on the phase point operators $[\alpha_0,\alpha_1,\cdots,\alpha_N]$  by a diagonal matrix ${\rm diag}(1,-1,1\cdots,-1)$, 
with determinant $-1$ if $N=4k+1$ and $1$ if $N=4k+3$ . 
On the affine planes $[\alpha_2,\alpha_{3},\cdots,\alpha_N]$, the action is again 
by a diagonal matrix but with determinant  $1$ if $N=4k+1$ and $-1$ if $N=4k+3$. 

As described in ref.~\cite{10} the extended Clifford group is obtained by admitting matrices in $\ESL(2,\mathbb{Z}_N)$ (the group consisting of $2\times 2$ matrices with entries in $\mathbb{Z}_N$ and   determinant $=\pm 1$).  The conjugacy classes of $\ESL(2,\mathbb{Z}_N)$ are described in the Appendix to this paper.  Unlike  $\SL(2,\mathbb{Z}_N)$ there is a distinction between the cases $N = 1 \text{ (mod $4$)}$ and $N = 3 \text{ (mod $4$)}$.    It turns out that for all values of $N$ we have the $2 N$ ``standard'' classes 
\begin{equation}
C_{\Delta, t} = \left[\begin{pmatrix} 0 & -\Delta \\ 1 & t\end{pmatrix} \right]
\end{equation}
where $\Delta = \pm 1$ and $t = 0, 1, \dots , (N-1)$, and the $2$ additional classes
\begin{equation}
D_{\pm 2} = \left[ \begin{pmatrix} \pm 1 & 0 \\ 0 & \pm 1 \end{pmatrix} \right]
\end{equation}
If $N= 3 \text{ (mod $4$)}$ these are the only conjugacy classes (see Theorem~~\ref{thm:ESLconj3mod4} in the Appendix).  If, however, $N = 1 \text{ (mod $4$)}$ there are another $6$ conjugacy classes, which we denote $\bar{C}_{\pm 2}$, $\bar{C}_{\pm 2 i}$, $D_{\pm 2 i}$ (see Theorem~\ref{thm:ESLconj1mod4} in the Appendix).

All elements of $ESL$ with determinant $-1$ are represented on the symplectic vector 
space of affine planes, $V_{N-1}(\mathbb{Z}_N)$, by matrices $\mathcal{A}$ that obey 

\begin{equation} \mathcal{A}^T\Omega \mathcal{A} = - \Omega \ . \end{equation}

\noindent They are therefore anti-canonical transformations. In Table~\ref{tb:fixedESL}  
\begin{table}[!ht]
\begin{center}
\begin{tabular}{| c| c| c| c| c| c| c|}  \hline
\te{1.4}{N} & \te{1.4}{Class} & \te{1.4}{number of elements} & \te{1.4}{order of cyclic subgroup} & \te{1.4}{fixed points in plane} & \te{1.4}{fixed $A$s}& \te{1.4}{fixed planes} 
\\ \hline 
3 & $C_{-1,0}$ & 12 & 2 & 3 & $3^2$ & 3 \\ 
& $C_{-1,1}$ & 6 & 8 &  1 & 1 & 1 \\
& $C_{-1,2}$ & 6 & 8 &  1 & 1 & 1 \\ 
\hline
5 & $C_{-1,0}$ & 30 & 2 & 5 & $5^3$ & $5^2$ \\ 
& $C_{-1,1}$ & 12 & 20 & 1 & 1 & 1 \\
& $\bar{C}_{1}$ & 12 & 20 & 1 & 1 & 1 \\
& $D_{1}$ & 1 & 4 & 1 & 1 & 1 \\
& $C_{-1,2}$ & 20 & 20 & 1 & 1 & 1 \\ 
& $C_{-1,3}$ & 20 & 12 & 1 & 1 & 1 \\
& $C_{-1,4}$ & 12 & 20 & 1 & 1 & 1 \\
& $\bar{C}_{4}$ & 12 & 20 & 1 & 1 & 1 \\
& $D_{4}$ & 1 & 4 & 1 & 1 & 1 \\
\hline 
7 & $C_{-1,0}$ & 56 & 2 & 7 & $7^4$ & $7^3$ \\
& $C_{-1,1}$ & 42 & 16 & 1 & 1 & 1 \\
& $C_{-1,2}$ & 56 & 6 & 1 & 7 & 7 \\
& $C_{-1,3}$ & 42 & 16 & 1 & 1 & 1 \\
& $C_{-1,4}$ & 42 & 16 & 1 & 1 & 1 \\
& $C_{-1,5}$ & 56 & 6 & 1 & 7 & 7 \\
& $C_{-1,6}$ & 42 & 16 & 1 & 1 & 1 \\ 
\hline
\end{tabular} 
\vspace{0.2 cm}
\caption{Fixed points under the action of elements in $ESL(2, Z_N)$ with 
determinant $-1$, $N \leq 7$.}
\label{tb:fixedESL} 
\end{center}
\end{table}
we list the number of fixed points for the conjugacy classes of $\ESL(2,\mathbb{Z}_N)$ 
with determinant $-1$. For all $N$ one can show that elements of order 2 leave 
$N^{(N-1)/2}$ affine planes fixed. Because the transformations are anti-canonical it 
follows that these fixed planes form Lagrangian subspaces of $V_{N-1}(\mathbb{Z}_N)$. 
Furthermore we believe 
that if an element with determinant $-1$ squares to an element leaving $N^{2k}$ $A$s 
fixed, then that element itself leaves $N^k$ $A$s fixed; we checked this for $N \leq 11$.

Again we can use Burnside's lemma to compute the number of orbits under $\ESL(2,\mathbb{Z}_N)$.  The results for $N\le 11$ are listed in Table~\ref{tb:noOrbitsE}.
\begin{table}
\begin{center}
\renewcommand{\arraystretch}{1.3}
\begin{tabular}{|c|c|c|c|c|} \hline 
\te{2}{$N$} & \te{1.5}{$3$} & \te{1.5}{$5$} &\te{1.5}{$7$}&\te{1.5}{$11$}
\\ \hline 
No.\ of orbits & 2 & 9 &210& 9833460\\ \hline
\end{tabular}
\vspace{0.2 cm}
\caption{Number of orbits of $ESL(2, Z_N)$ when acting on the set of affine planes.}
\label{tb:noOrbitsE}
\end{center}
\end{table}
For large $N$ the number approaches the estimate $N^{N-1}/2N^3$. For $N \leq 7$ we know the number 
of distinct spectra, and we find complete agreement with the number of orbits.

\section{Summary}
\bigskip
\noindent

To summarise:
\begin{itemize}
\item For $N$ an odd prime, a compact, mathematica friendly way of representing phase point operators and affine planes as vectors and cosets in an $N+1$ dimensional vector space over $Z_N$ is given. 
\item In this representation the $\SL(2,Z_N)$ actions on phase point operators 
actions by matrices in $\SL(N+1,Z_N)$. 
\item A convenient basis, factoring out the Heisenberg-Weyl group actions, is introduced in which $\SL(2,Z_N)$ actions on the affine planes can be described in terms of matrices in $\SL(N-1,Z_N)$.
\item The action of complex conjugation on the phase point operators 
is by matrices in   $\GL(N+1,Z_N)$ with determinant $-1$ if $N=4k+1$ and $1$ if $N=4k+3$.
\item The action of complex conjugation on the affine planes 
is by  matrices in   $\GL(N-1,Z_N)$ with determinant= $1$ if $N=4k+1$ and $-1$ if $N=4k+3$.
\item The $(N-1)$-dimensional vector space of affine planes is symplectic. Elements of 
$ESL(2, Z_N)$ with determinant $1$ preserve the symplectic form, elements with 
determinant $-1$ are anti-canonical. 
\item For $N \leq 7$ the number of distinct spectra in the phase point operators 
equals the number of orbits under the group $\ESL(2, \mathbb{Z}_N)$.

\end{itemize}

The work reported here suggests a number of avenues for future research.   In the first place one would like to know if the degeneracies of the phase point operators are completely determined by $\ESL(2, \mathbb{Z}_N)$ for every odd prime dimension $N$.  Beyond that one would like to know if the result generalizes to the case of arbitrary odd prime \emph{power} dimension.    Finally, there is the question of even prime power dimensions.  In that connection let us observe that Gibbons et al \cite{5} find 4 
similarity classes when $N = 4$.  However, when we computed the spectra we found only the $3$ distinct spectra tabulated in  Table~\ref{tb:spectra2}.
\begin{table}[h]
\begin{center}
\renewcommand{\arraystretch}{1.4}
\begin{tabular}{|c|c|}
\hline
spectrum & no. of occurrences \\
\hline
 $\{-0.896802,-0.14204,0.278768,1.76007\}$ & 384  \\ \hline
$ \{-0.866025,-0.5,0.866025,1.5\}$ & 320  \\ \hline
 $ \{-0.5,-0.5,0.133975,1.86602\} $ & 320  \\ \hline
\end{tabular}
\vspace{0.2 cm}
\caption{Possible spectra of the phase point operators in an affine plane, $N = 4$.}
\label{tb:spectra2}
\end{center}
\end{table}
 The cause of this additional degeneracy cannot be the same as the cause in odd prime dimension since when $N$ is an even prime power the fact that $-1 = 1 \text{ (mod $2$)}$ means that $\SL(2, \mathbb{F}_N)= \ESL(2, \mathbb{F}_N)$.  There must therefore be some other explanation, which it would be interesting to investigate.

\vspace{15mm}

{\bf Acknowledgements:}

\

\noindent We thank Ernesto Galv\~ao for sharing some preliminary results in this direction. 
Our work was financially supported by the Wenner-Gren Foundations, as well as by the 
Swedish Research Council. One of us (SC) also wishes to thank the Leverhulme
Trust for a Visiting Professorship at the Imperial College London where a
part of this work was done. 

\appendix

\section*{Appendix:  Conjugacy Classes of $\ESL(2,\mathbb{F}_N)$}
In this appendix we deduce the conjugacy classes for the group $\ESL(2,\mathbb{F}_N)$ where $N=p^l$ is a power of an odd prime number $p$ and  $\mathbb{F}_N$ is the Galois field~\cite{Lidl} having $N$ elements (in the main text we only need the result for the case $l=1$, $\mathbb{F}_N = \mathbb{Z}_N$; however, the result for arbitrary $l$ is no more difficult, and it may be useful for future developments).  For the sake of completeness, and the convenience of the reader,  we also describe the conjugacy classes of the group $\SL(2,\mathbb{F}_N)$ (also given in ref.~\cite{12}

Let $\mathbb{F}^{*}_N$ be the set of non-zero elements of $\mathbb{F}_N$, and define
\begin{align}
Q & = \{x \in \mathbb{F}^{*}_N \colon x = y^2 \text{ for some $y \in \mathbb{F}^{*}_N$}\}
\\
\bar{Q} & = \{x \in \mathbb{F}^{*}_N \colon x \notin Q \}
\end{align}
Let $\theta$ be a primitive element~\cite{Lidl} for $\ESL(2,\mathbb{F}_N)$.  Then it is easily seen that $\theta^s\in Q$ if and only if $s$ is even.  Consequently $Q$ and $\bar{Q}$ each contain exactly $(N-1)/2$ elements.  Also the fact that $\theta^{(N-1)/2} = -1$ means that $-1 \in Q$ if and only if $N = 1 \text{ (mod $4$)}$.  

We begin by proving three preliminary results.
\begin{lemma}
\label{lem:ResCount}
For all $x \in \mathbb{F}^{*}_N$
\begin{equation}
\left|\bar{Q}\cap \left( \bar{Q}-x\right)\right|
=
\begin{cases}
\frac{N-1}{4} \qquad & \text{if $N=1$ (mod $4$) and $x\in Q$ }\\
\frac{N-5}{4} \qquad & \text{if $N=1$ (mod $4$) and $x\in \bar{Q}$}\\
\frac{N-3}{4} \qquad & \text{if $N=3$ (mod $4$)} 
\end{cases}
\end{equation}
where the notation $\left|S\right|$ means ``number of elements in the set $S$''.
\end{lemma}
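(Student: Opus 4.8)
The plan is to recast the counting problem in terms of the quadratic (Legendre) character $\chi\colon\mathbb{F}_N\to\{0,\pm1\}$, defined by $\chi(x)=1$ for $x\in Q$, $\chi(x)=-1$ for $x\in\bar{Q}$, and $\chi(0)=0$. An element lies in $\bar{Q}\cap(\bar{Q}-x)$ exactly when it is a nonzero non-residue $z$ for which $z+x$ is also a nonzero non-residue, so
\begin{equation}
\left|\bar{Q}\cap(\bar{Q}-x)\right|=\sum_{\substack{z\neq0\\ z\neq-x}}\frac{1-\chi(z)}{2}\cdot\frac{1-\chi(z+x)}{2},
\end{equation}
where the restriction $z\neq0,-x$ is needed precisely because $\tfrac{1-\chi(0)}{2}=\tfrac12$ would spuriously register a non-residue. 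First I would expand the product into four sums, $\sum 1$, $\sum\chi(z)$, $\sum\chi(z+x)$ and $\sum\chi(z)\chi(z+x)$, each taken over $z\in\mathbb{F}_N\setminus\{0,-x\}$.

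The three easy pieces come next. The constant sum contributes $N-2$. Using the orthogonality relation $\sum_{z\in\mathbb{F}_N}\chi(z)=0$ and the fact that the excluded point $z=0$ contributes nothing, one gets $\sum_{z\neq0,-x}\chi(z)=-\chi(-x)$; reindexing by $w=z+x$ gives $\sum_{z\neq0,-x}\chi(z+x)=-\chi(x)$ in the same way.

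The one genuinely character-theoretic step is the cross term $T=\sum_{z\in\mathbb{F}_N}\chi(z)\chi(z+x)$, where I may restore the excluded terms since they vanish automatically. For $z\neq0$ I would use multiplicativity together with $\chi(z)^2=1$ to write $\chi(z)\chi(z+x)=\chi(1+xz^{-1})$; as $z$ runs over $\mathbb{F}^{*}_N$ the quantity $u=xz^{-1}$ runs bijectively over $\mathbb{F}^{*}_N$ (this is where $x\neq0$ is essential), so $T=\sum_{u\neq0}\chi(1+u)=\sum_{w\neq1}\chi(w)=-\chi(1)=-1$. This is the step to get right; everything else is bookkeeping.

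Assembling the four contributions gives
\begin{equation}
\left|\bar{Q}\cap(\bar{Q}-x)\right|=\frac{1}{4}\bigl(N-3+\chi(x)+\chi(-x)\bigr).
\end{equation}
Finally I would invoke $\chi(-x)=\chi(-1)\chi(x)$ together with the fact, recorded above, that $\chi(-1)=1$ when $N\equiv1\ (\mathrm{mod}\ 4)$ and $\chi(-1)=-1$ when $N\equiv3\ (\mathrm{mod}\ 4)$. In the former case $\chi(x)+\chi(-x)=2\chi(x)$, equal to $+2$ for $x\in Q$ and to $-2$ for $x\in\bar{Q}$, giving $(N-1)/4$ and $(N-5)/4$ respectively; in the latter case $\chi(x)+\chi(-x)=0$ for every $x$, giving $(N-3)/4$. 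These are exactly the three asserted values, completing the proof.
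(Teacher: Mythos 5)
Your proof is correct, and it takes a genuinely different route from the paper. The paper's argument is purely set-theoretic: it introduces the auxiliary sets $S_x=(x\bar{Q})\cap(\bar{Q}-x)$ and $T_x=(x\bar{Q})\cap(Q-x)$, shows they are swapped by the explicit bijection $y\mapsto x^2y^{-1}$ (hence equinumerous), computes $\left|S_x\cup T_x\right|$ by checking whether $-x$ lies in $x\bar{Q}$ (this is where $N\bmod 4$ enters, via $-1\in Q$ or $-1\in\bar{Q}$), and finally identifies $x\bar{Q}$ with $\bar{Q}$ or $Q$ according to whether $x\in Q$ or $x\in\bar{Q}$. Your argument instead packages everything into the quadratic character: the only nontrivial input is the classical evaluation $\sum_{z}\chi(z)\chi(z+x)=-1$, which you derive correctly by writing $\chi(z)\chi(z+x)=\chi(1+xz^{-1})$ and reindexing, and the rest is bookkeeping culminating in the clean closed form
\begin{equation}
\left|\bar{Q}\cap(\bar{Q}-x)\right|=\tfrac{1}{4}\bigl(N-3+\chi(x)+\chi(-x)\bigr),
\end{equation}
from which all three cases drop out via $\chi(-1)=\pm1$. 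What each approach buys: the paper's bijection trick is elementary and self-contained, using nothing beyond the multiplicative structure of $Q$ and $\bar{Q}$ already set up for the appendix; yours is more systematic and more general --- the same computation with signs flipped evaluates $\left|Q\cap(Q-x)\right|$, $\left|Q\cap(\bar{Q}-x)\right|$, and $\left|\bar{Q}\cap(Q-x)\right|$ simultaneously, and it makes transparent why the answer depends only on $\chi(x)$ and $\chi(-1)$ rather than appearing as a case-by-case accident. Both proofs rely on the same two preliminary facts stated in the paper before the lemma ($\left|Q\right|=\left|\bar{Q}\right|=(N-1)/2$, and $-1\in Q$ iff $N=1$ (mod $4$)), so your appeal to them is legitimate.
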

\begin{proof}
Let $x$ be any fixed element $\in \mathbb{F}^{*}_N$.  Define
\begin{align}
S_x & =\left( x \bar{Q}\right) \cap \left(\bar{Q} - x\right)
\\
T_x & = \left( x \bar{Q}\right) \cap \left(Q -x\right)
\end{align}
and let $f_x \colon \mathbb{F}^{*}_N \to \mathbb{F}^{*}_N$ be the map defined by
\begin{equation}
f_x(y) = x^2 y^{-1} .
\end{equation}
It is easily verified that $f_x$ is a bijection and (using the fact that $QQ=\bar{Q}\bar{Q} = Q$ and $Q\bar{Q} = \bar{Q}$)
\begin{equation}
f_x(S_x) = T_x \qquad f_x(T_x) = S_x.
\end{equation}
Since $S_x$, $T_x$ are bijective images of each other we deduce that $|S_x| = |T_x|$.

Suppose, now, that $N=1 \text{ (mod $4$)}$.  Then $-1\in Q$ which means $-x \notin x \bar{Q}$ for all non-zero $x$.  Consequently $\left|S_x \cup T_x\right|=\left|x\bar{Q}\right| = (N-1)/2$, implying
\begin{equation}
|S_x| = |T_x| = \frac{N-1}{4}
\end{equation}
Suppose, on the other hand, that $N=3 \text{ (mod $4$)}$.  Then $-1\in \bar{Q}$ which means $-x \in x \bar{Q}$ for all non-zero $x$.  Consequently $\left|S_x \cup T_x\right|=\left|x\bar{Q}\right| -1= (N-3)/2$, implying
\begin{equation}
|S_x| = |T_x| = \frac{N-3}{4}
\end{equation}
To complete the proof suppose, first of all, that $x\in Q$.   Then $x \bar{Q} = \bar{Q}$, implying
\begin{equation}
\left|\bar{Q}\cap \left( \bar{Q}-x\right)\right|
= \left|S_x\right|
=\begin{cases}
\frac{N-1}{4} \qquad & N = 1 \text{ (mod $4$)} \\
\frac{N-3}{4} \qquad & N = 3 \text{ (mod $4$)}
\end{cases}
\end{equation}
Suppose, on the other hand, that $x\in \bar{Q}$.   Then $x \bar{Q} = Q$, implying
\begin{align}
\left|\bar{Q}\cap \left( \bar{Q}-x\right)\right|
& = \frac{N-3}{2} - \left|Q \cap \left( \bar{Q}-x\right)\right|
\nonumber
\\
& = \frac{N-3}{2} - \left|S_x \right|
\nonumber
\\
& = \begin{cases}
\frac{N-5}{4} \quad & N = 1 \text{ (mod $4$)} \\
\frac{N-3}{4}  \quad & N = 3 \text{ (mod $4$)} 
\end{cases}
\end{align}
\end{proof}
\begin{lemma}
\label{lem:munueq}
For all $\mu \in\mathbb{F}^{*}_N$ and $\nu \in \mathbb{F}^{\vphantom{*}}_N$ there exists $q\in Q\cup \{0\}$ such that
\begin{equation}
\mu q + \nu \in Q\cup \{0\}
\end{equation}
\end{lemma}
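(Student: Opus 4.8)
The plan is to recast the statement as a set–intersection problem and settle it by a pigeonhole count. Write $R = Q \cup \{0\}$ for the set of all squares in $\mathbb{F}_N$, the zero element included. Since $Q$ contains exactly $(N-1)/2$ elements (as recorded above), $R$ has exactly $(N+1)/2$ elements. The desired conclusion---that $\mu q + \nu \in Q\cup\{0\}$ for some $q \in Q\cup\{0\}$---is precisely the assertion that the two subsets $R$ and $\mu R + \nu = \{\mu q + \nu : q \in R\}$ of $\mathbb{F}_N$ share a common element.

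First I would note that the affine map $q \mapsto \mu q + \nu$ is a bijection of $\mathbb{F}_N$ onto itself, since $\mu \in \mathbb{F}^{*}_N$; hence it carries $R$ onto a set $\mu R + \nu$ of the same cardinality $(N+1)/2$. Then I would simply count: $|R| + |\mu R + \nu| = (N+1)/2 + (N+1)/2 = N+1 > N = |\mathbb{F}_N|$, so by inclusion–exclusion
\begin{equation}
\bigl| R \cap (\mu R + \nu)\bigr| \geq |R| + |\mu R + \nu| - |\mathbb{F}_N| \geq 1 ,
\end{equation}
and the intersection is nonempty. Finally, picking any $x \in R \cap (\mu R + \nu)$, the membership $x \in \mu R + \nu$ gives some $q \in R = Q\cup\{0\}$ with $x = \mu q + \nu$, while $x \in R = Q\cup\{0\}$ shows this $q$ is the element sought.

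There is essentially no obstacle here: the entire content is the numerical coincidence $2\cdot\tfrac{N+1}{2} > N$, which forces the two square-sets to overlap. One could instead attempt to extract the result from the residue counts of Lemma~\ref{lem:ResCount} by splitting into cases according to whether $\mu$ and $\nu$ lie in $Q$ or $\bar{Q}$, but that route is longer and less transparent. The direct counting argument above avoids all case analysis and applies uniformly to both $N \equiv 1$ and $N \equiv 3 \pmod 4$, which is exactly the robustness one wants before feeding this lemma into the $\ESL(2,\mathbb{F}_N)$ conjugacy-class analysis that follows.
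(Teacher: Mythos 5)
Your proof is correct, and it takes a genuinely different (and more economical) route than the paper. The paper splits into two cases according to whether $\mu \in Q$ or $\mu \in \bar{Q}$: when $\mu \in Q$ it runs a pigeonhole count much like yours (comparing $\left|Q\cup\{0\}+\nu\right|$ with $\left|\bar{Q}\right|$ to find $q'\in Q\cup\{0\}$ with $q'+\nu \in Q\cup\{0\}$, then setting $q=\mu^{-1}q'$), but when $\mu \in \bar{Q}$ it invokes the exact intersection counts of Lemma~\ref{lem:ResCount} to produce $q' \in \bar{Q}$ with $q'+\nu \in Q \cup \{0\}$, again finishing with $q = \mu^{-1} q'$. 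Your single counting argument --- that $R = Q \cup \{0\}$ and its affine image $\mu R + \nu$ each have $(N+1)/2$ elements and therefore must meet inside a field of $N$ elements --- handles both cases at once, because you let the affine map carry $\mu$ and $\nu$ together instead of first solving the translation-by-$\nu$ problem and then undoing the multiplication by $\mu$; it is that last step which forces the paper to track whether $\mu^{-1}$ preserves or swaps $Q$ and $\bar{Q}$, and hence to split into cases. As far as this lemma is concerned you lose nothing: within the paper, Lemma~\ref{lem:ResCount} is used only to prove Lemma~\ref{lem:munueq}, so your argument would render it dispensable for the $\ESL(2,\mathbb{F}_N)$ conjugacy-class analysis. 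What the paper's route does provide is finer quantitative information --- the exact sizes of the intersections, including the distinction between $N \equiv 1$ and $N \equiv 3 \pmod 4$ --- which a pure pigeonhole bound discards, but none of that refinement is needed downstream.
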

\begin{proof}
Case 1:   $\mu \in Q$.  We have $\left| Q\cup \{0\} + \nu \right| >\left|\bar{Q}\right|$.  Consequently the set $\left( Q\cup \{0\} + \nu \right) \cap \left( Q\cup \{0\}\right)$ is non-empty.  We can therefore choose $q' \in Q\cup\{0\}$ such that $q' + \nu \in Q\cup\{0\}$.  Then $q = \mu^{-1} q'$  has the stated property.

Case 2: $\mu \in \bar{Q}$.  If $\nu = 0$ we can choose $q=0$.  If, on the other hand, $\nu \neq 0$ we note that it follows from Lemma~\ref{lem:ResCount} that $\left|\left(\bar{Q} + \nu\right)\cap \bar{Q}\right| <|\bar{Q}|$.  So there exists $q' \in\bar{Q}$ such that $q' + \nu\in Q\cup\{0\}$.  Then $q= \mu^{-1} q'$ has the stated property.  
\end{proof}
\begin{lemma}
\label{lem:simTrans}
Let 
\begin{equation}
F = \begin{pmatrix} \alpha & \beta \\ \gamma & \delta \end{pmatrix}
\end{equation}
be any matrix $\in \ESL(2,\mathbb{F}_N)$.  Let $\Delta = \det F$ and $t= \tr F$.  Then
\begin{enumerate}
\item If $t^2 - 4 \Delta\neq 0$ there exists $S\in \ESL(2,\mathbb{F}_N)$ such that
\begin{equation}
F = S  \begin{pmatrix} 0 & - \Delta \\ 1 & t \end{pmatrix} S^{-1}
\end{equation}
\item If $t^2 - 4 \Delta =0$ and $\beta \neq 0$ then, for any $u \in \mathbb{F}^{*}_N$, there exists $S\in \ESL(2,\mathbb{F}_N)$ such that
\begin{equation}
F =  S \begin{pmatrix} 0 & \frac{ \Delta}{\beta u^2}  \\ -\beta u^2 & t \end{pmatrix} S^{-1}
\end{equation}
\item If $t^2 - 4 \Delta =0$ and $\gamma \neq 0$ then, for any $u \in \mathbb{F}^{*}_N$, there exists $S\in \ESL(2,\mathbb{F}_N)$ such that
\begin{equation}
F = S \begin{pmatrix} 0 & -\frac{ \Delta}{\gamma u^2}  \\ \gamma u^2 & t \end{pmatrix}S^{-1}
\end{equation}
\end{enumerate}
Moreover in every case the matrix $S$ can be chosen so that $\det S = \Delta$.
\end{lemma}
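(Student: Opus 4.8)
The plan is to realize every $F$ as a conjugate of a companion-type matrix by an explicit conjugating matrix built from a single cyclic vector, and to control $\det S$ through an auxiliary quadratic form. The central object is the map $D(v) = \det(v \,|\, Fv)$, where $(v \,|\, Fv)$ denotes the $2\times 2$ matrix with columns $v$ and $Fv$. Writing $v = (x,y)^T$ one computes $D(v) = \gamma x^2 + (\delta-\alpha)xy - \beta y^2$, a binary quadratic form whose discriminant is exactly $(\delta-\alpha)^2 + 4\beta\gamma = t^2 - 4\Delta$. The key algebraic identity, which follows from the Cayley--Hamilton relation $F^2 = tF - \Delta I$, is that for $S_0 = (v \,|\, Fv)$ one has $F S_0 = S_0 M$ with $M = \begin{pmatrix} 0 & -\Delta \\ 1 & t\end{pmatrix}$; hence whenever $D(v)\neq 0$ (equivalently, $v$ is a cyclic vector) we get $F = S_0 M S_0^{-1}$ with $\det S_0 = D(v)$. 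The whole problem thus reduces to choosing $v$ so that $D(v)$ takes a prescribed value.

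For part (1), when $t^2 - 4\Delta \neq 0$ the form $D$ is nondegenerate, and I would invoke the standard fact that a nondegenerate binary quadratic form over a finite field of odd order represents every nonzero element of $\mathbb{F}_N$ (its anisotropic representative is the norm form of $\mathbb{F}_{N^2}/\mathbb{F}_N$, whose image is all of $\mathbb{F}^{*}_N$, while its isotropic representative $xy$ is universal). Hence there is a $v$ with $D(v) = \Delta$; for this $v$, $S = S_0$ satisfies $F = S M S^{-1}$ and $\det S = \Delta$. Since the target matrix in (1) is exactly $M$, this settles part (1) and simultaneously delivers the ``moreover'' clause.

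For parts (2) and (3), when $t^2 - 4\Delta = 0$ the form $D$ is degenerate. Note first that $\Delta = (t/2)^2$ is then a nonzero square (and $t\neq 0$, since $\Delta = \pm 1$). Completing the square gives $D(v) = -\beta\, s^2$ with $s = y - \tfrac{\delta-\alpha}{2\beta}x$ in case (2) (and symmetrically $D(v) = \gamma\, s'^2$ in case (3)); since $\beta \neq 0$, $F$ is nonscalar and cyclic vectors exist. The target $M_u = \begin{pmatrix} 0 & \Delta/(\beta u^2) \\ -\beta u^2 & t\end{pmatrix}$ is obtained from $M$ by a diagonal conjugation: with $R = \mathrm{diag}(a,b)$ and $b/a = -\beta u^2$ one checks $R M R^{-1} = M_u$. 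Setting $S = S_0 R^{-1}$ then yields $F = S M_u S^{-1}$ with $\det S = D(v)/(ab)$, and it remains to solve $D(v)/(ab) = \Delta$ for the free parameters: choosing any $s\neq 0$ (hence a cyclic $v$) and using $b = -\beta u^2 a$, the constraint becomes $a^2 = s^2/(\Delta u^2)$, which is solvable precisely because $\Delta$ is a square. This produces, for every prescribed $u \in \mathbb{F}^{*}_N$, a matrix $S$ with $\det S = \Delta$. Case (3) is handled identically after interchanging the roles of $\beta$ and $\gamma$ (the sign changes in $b/a$ and in $D$ are cosmetic and leave the solvability conclusion unchanged).

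The step I expect to be the main obstacle is the degenerate case: one must not merely prove conjugacy in $\GL(2,\mathbb{F}_N)$ (which is automatic from equal trace and determinant for a single Jordan block) but pin down the conjugating matrix so that $\det S$ equals the specified $\Delta$ while hitting the $u$-dependent normal form $M_u$ for \emph{every} $u$. The clean resolution is the diagonal-conjugation trick combined with the observation that $\Delta$ is a square in the degenerate regime, which is exactly what makes the quadratic constraint $a^2 = s^2/(\Delta u^2)$ solvable; by contrast, in the nondegenerate case the only substantive input is the representation property of nondegenerate binary forms.
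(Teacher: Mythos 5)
Your proof is correct, and its skeleton coincides with the paper's: the paper also forces the conjugating matrix to have columns $v$ and a multiple of $Fv$, and its determinant condition, Eq.~(\ref{eq:xyCondA}), is exactly your requirement that the form $D(v)=\gamma x^2+(\delta-\alpha)xy-\beta y^2$ take a prescribed value (your $S=S_0R^{-1}$ with columns $v/a$, $Fv/b$ is, up to an overall scalar, the paper's $S$ with columns $v$, $Fv/k$, $k=b/a$). The genuine differences are in how the representation problem is then solved. For part (1) the paper completes the square and invokes its Lemma~\ref{lem:munueq}, which rests on the residue count of Lemma~\ref{lem:ResCount}; that pair of lemmas is nothing but a self-contained, elementary proof of the fact you cite from the standard theory of quadratic forms over finite fields (isotropic forms are universal, anisotropic ones are norm forms with surjective norm). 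Your citation is legitimate and makes the two preliminary lemmas unnecessary, at the cost of importing the classification of binary forms; the paper's route needs only counting in $\mathbb{Z}_N$. For parts (2) and (3) the paper solves $D(v)=-\beta u^2\Delta$ in one step (explicitly: $x=1$, $y=(ut\beta+\delta-\alpha)/(2\beta)$, using $t^2=4\Delta$), whereas you first conjugate to the $k=1$ companion matrix and then adjust by a diagonal conjugation, pushing the whole constraint onto the free scalar $a$; both resolutions turn on the same observation, that $\Delta=(t/2)^2$ is a nonzero square in the degenerate regime, so they are equivalent in content. Finally, your Cayley--Hamilton derivation of $FS_0=S_0M$ is a cleaner way of seeing why $S$ must have this column structure than the paper's direct matrix computation, and it makes the ``moreover'' clause transparent, since $\det S_0=D(v)$ by construction.
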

\begin{proof}
Let $k \in \mathbb{F}^{*}_N$ be arbitrary and look for a matrix
\begin{equation}
S = \begin{pmatrix} x & z \\ y & w \end{pmatrix}
\end{equation}
such that 
\begin{equation}
F = S \begin{pmatrix} 0 & -\frac{\Delta }{k} \\ k & t \end{pmatrix} S^{-1}
\end{equation}
It is easily seen that $S$ has this property if and only if
\begin{equation}
F \begin{pmatrix} x \\ y \end{pmatrix} = k \begin{pmatrix} z \\ w \end{pmatrix}
\end{equation}
which means $S$ must be of the form
\begin{equation}
S = \begin{pmatrix} x & \frac{\alpha x + \beta y}{k} \\ y & \frac{\gamma x + \delta y}{k} \end{pmatrix}
\end{equation}
Taking into account the requirement $\det S = \Delta$ the problem thus reduces to the problem of finding $x$, $y$ such that
\begin{equation}
\gamma x^2 + (\delta - \alpha) x y - \beta y^2 = k \Delta
\label{eq:xyCondA}
\end{equation}

\subsubsection*{Case 1:  $t^2 - 4 \Delta \neq 0$} Setting $k=1$ in Eq.~(\ref{eq:xyCondA}) the condition becomes
\begin{equation}
\gamma x^2 + (\delta - \alpha) x y - \beta y^2 = \Delta
\end{equation}
If $\gamma = 0$ this equation has the solution $x=(\Delta + \beta)/(\delta-\alpha)$, $y=1$ (note that the fact that $t^2 - 4 \Delta \neq 0$ means $\delta-\alpha \neq 0$).  If, on the other hand, $\gamma \neq 0$ we can rewrite the equation in the form
\begin{equation}
\left( x + \frac{(\delta -\alpha) y}{2 \gamma}\right)^2 = \frac{(t^2-4\Delta) y^2 + 4 \gamma \Delta}{4 \gamma^2}
\end{equation}
It follows from Lemma~\ref{lem:munueq} that there exist $r, s\in \mathbb{F}_N$ such that $(t^2-4\Delta) s^2 + 4 \gamma \Delta=r^2$.  The equation then has the solution $x = \left(r-s(\delta-\alpha)\right)/(2 \gamma)$, $y= s$.

\subsubsection*{Case 2:  $t^2 - 4 \Delta = 0$ and $\beta \neq 0$} Setting $k=-\beta u^2$ in Eq.~(\ref{eq:xyCondA}) the condition becomes, after rearranging, 
\begin{equation}
\left( y-\frac{(\delta -\alpha) x}{2 \beta}\right)^2 = \frac{u^2 t^2}{4}
\end{equation}
which has the solution $x=1$, $y = \left( u t \beta + (\delta - \alpha)\right)/(2 \beta)$.  

\subsubsection*{Case 3:  $t^2 - 4 \Delta = 0$ and $\gamma \neq 0$}  Proved in the same way as Case 2.

\end{proof}
We are now ready to deduce the conjugacy classes.  We write $F \sim G$ if $F$ is conjugate to $G$, and  use the symbol $[F]$ to denote the conjugacy class containing $F$.  Define
\begin{equation}
C_{\Delta,t} = \left[\begin{pmatrix} 0 & - \Delta \\ 1 & t\end{pmatrix}\right] 
\end{equation}
where $\Delta = \pm 1$ and $t$ is any element of $\mathbb{F}_N$.  We refer to these as the ``standard'' conjugacy classes.  We also need to consider the classes
\begin{equation}
\bar{C}_{t} = \left[\begin{pmatrix} 0 & - \frac{t^2}{4 \nu} \\ \nu & t\end{pmatrix}\right] 
\qquad \text{and} \qquad D_{ t}= \left[\begin{pmatrix} t/2 & 0 \\ 0 & t/2\end{pmatrix}\right]
\end{equation}
where $\nu$ is any fixed element of $\bar{Q}$ and $t$ is such that $t^2= \pm 4$.  We then have
\begin{theorem}
\label{thm:ESLconj1mod4}
Let $N = 1 \text{ (mod $4$)}$.  Let $i$ be one of the two elements of $\mathbb{F}_N$ with the property $i^2 = -1$.  Then the conjugacy classes of $\ESL(2,\mathbb{F}_N)$ comprise
\begin{enumerate}
\item The $2N$ standard classes $C_{\Delta, t}$  with $\Delta = \pm 1$ and $t\in \mathbb{F}_N$.
\item The $8$ additional classes  $\bar{C}_{\pm 2}$, $\bar{C}_{\pm 2 i}$, $D_{\pm 2}$, $D_{\pm 2 i}$.
\end{enumerate}
Let $F = \left(\begin{smallmatrix} \alpha & \beta \\ \gamma & \delta \end{smallmatrix}\right)$ and let $\Delta = \det F$, $t = \tr F$.  Then
\begin{enumerate}
\item If $t^2 - 4 \Delta\neq 0$
\begin{equation}
F \in C_{\Delta, t}
\end{equation}
\item If $t^2 - 4 \Delta = 0$
\begin{equation}
F \in \begin{cases} C_{\Delta, t} \qquad & \text{if $\beta$ or $\gamma \in Q$}\\
\bar{C}_{t} \qquad & \text{if $\beta$ or $\gamma \in \bar{Q}$}\\
D_{t} \qquad & \text{if $\beta = \gamma =0$}
\end{cases}
\end{equation}
(note that if $t^2 - 4\Delta = 0$ it cannot happen that $\beta \in Q$ and $\gamma \in \bar{Q}$, or that $\beta \in \bar{Q}$ and $\gamma \in Q$).
\end{enumerate}
\end{theorem}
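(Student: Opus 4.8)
The plan is to let the three preliminary lemmas, and in particular the similarity transformation Lemma~\ref{lem:simTrans}, do the heavy lifting, and to organise the argument according to whether $t^2-4\Delta$ vanishes. Throughout, the fact that conjugation is by elements of $\ESL$ (determinant $\pm 1$) is compatible with the requirement $\det S = \Delta$ in Lemma~\ref{lem:simTrans}, since $\Delta = \det F = \pm 1$.

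First I would dispose of the non-degenerate case $t^2 - 4\Delta \neq 0$. Here part~(1) of Lemma~\ref{lem:simTrans} gives $F \sim \left(\begin{smallmatrix} 0 & -\Delta \\ 1 & t\end{smallmatrix}\right)$ directly, so every such $F$ lies in $C_{\Delta,t}$. Because trace and determinant are conjugacy invariants, distinct pairs $(\Delta,t)$ yield distinct classes, so the standard classes with $t^2-4\Delta\neq 0$ are exactly as claimed, with no further relations among them.

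Next, the degenerate case $t^2 = 4\Delta$. Since $\Delta = \pm 1$ and, crucially, $-1\in Q$ when $N\equiv 1 \pmod 4$, this is solvable precisely for $t=\pm 2$ (when $\Delta=1$) and $t=\pm 2i$ (when $\Delta=-1$), giving four degenerate pairs. For a non-scalar such $F$ I would invoke parts~(2)--(3) of Lemma~\ref{lem:simTrans}: if $\beta\neq 0$ it yields $F\sim M_k := \left(\begin{smallmatrix} 0 & -\Delta/k \\ k & t\end{smallmatrix}\right)$ for every $k$ in the coset $\beta Q$ (the lower-left entry $-\beta u^2$ sweeps out $\beta Q$ because $-1\in Q$), and symmetrically if $\gamma\neq 0$. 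The consistency of the ``$\beta$ or $\gamma$'' dichotomy is secured by the identity $(\alpha-\delta)^2 = -4\beta\gamma$, which together with $-4\in Q$ forces $\beta\gamma\in Q\cup\{0\}$, so that $\beta$ and $\gamma$ share a coset whenever both are nonzero. Applying the lemma once more to $M_k$ itself shows $M_k\sim M_{k'}$ whenever $k,k'$ lie in the same coset of $Q$; hence the non-scalar degenerate elements fall into at most two classes, represented by $M_1=C_{\Delta,t}$ and $M_\nu=\bar C_t$, while $\beta=\gamma=0$ forces $F=\tfrac{t}{2}I$, the central singleton $D_t$.

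The main obstacle is to show that $C_{\Delta,t}$ and $\bar C_t$ are genuinely distinct rather than accidentally conjugate. I would prove this by writing a non-scalar degenerate $F$ as $\tfrac{t}{2}I + N$ with $N$ a rank-one nilpotent, so classifying $F$ reduces to classifying $N = v w^{T}$ with $w^{T}v=0$. Using the identity $S^{T}JS = (\det S)J$ with $J=\left(\begin{smallmatrix} 0 & -1 \\ 1 & 0\end{smallmatrix}\right)$, one finds that the scalar $\mu$ defined by $w=\mu Jv$ — which is well-defined modulo squares — transforms under conjugation by $S$ as $\mu\mapsto(\det S)\,\mu$. Since $\det S=\pm 1\in Q$ precisely because $N\equiv 1\pmod 4$, the coset $\mu\bmod Q$ is an $\ESL$-conjugacy invariant, and a short computation gives $\mu=-1/k$ for $M_k$, so it separates $M_1$ (coset $Q$) from $M_\nu$ (coset $\bar Q$). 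It is exactly here that the hypothesis $N\equiv 1\pmod 4$ is indispensable: for $N\equiv 3\pmod 4$ one has $-1\in\bar Q$, a determinant-$(-1)$ conjugation flips the coset, and the two classes merge — which is why the $\bar C$ classes are absent in that case. Assembling the pieces, the four degenerate pairs each contribute one extra $\bar C_t$ and one $D_t$ beyond the standard $C_{\Delta,t}$, producing the $8$ additional classes $\bar C_{\pm 2},\bar C_{\pm 2i},D_{\pm 2},D_{\pm 2i}$; together with the $2N$ standard classes and the membership criteria read off above, this is the full statement.
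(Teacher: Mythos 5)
Your proof is correct, and its overall skeleton matches the paper's: Lemma~\ref{lem:simTrans} supplies conjugacy to the representatives $M_k$, trace and determinant separate classes with different $(\Delta,t)$, and the only genuine issue is the disjointness $C_{\Delta,t}\cap\bar C_t=\emptyset$ in the degenerate case. Where you differ is in how that last step is proved. The paper does it by brute force: it writes $F=S\left(\begin{smallmatrix}0&-1\\1&\pm2\end{smallmatrix}\right)S^{-1}$ with $S=\left(\begin{smallmatrix}x&z\\y&w\end{smallmatrix}\right)$, multiplies out, and observes that the lower-left entry is $\Delta_S(w\mp y)^2$, hence lies in $Q\cup\{0\}$ because $\Delta_S=\pm1\in Q$ when $N\equiv 1\pmod 4$; so the matrix with lower-left entry $\nu\in\bar Q$ cannot occur, and similarly for $\bar C_{\pm 2i}$. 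You instead package the same residue-class invariant conceptually: splitting off the nilpotent part $F-\tfrac{t}{2}I=vw^{T}$, defining $\mu$ by $w=\mu Jv$, and using $S^{T}JS=(\det S)J$ to show the coset of $\mu$ modulo squares transforms by $\det S$, then evaluating $\mu=-1/k$ on $M_k$. The two arguments isolate exactly the same invariant (the quadratic character of the off-diagonal entry), but yours buys two things the paper's computation leaves implicit: it makes manifest why $\ESL$-invariance of the coset is equivalent to $-1\in Q$, and hence explains in one stroke why the $\bar C$ classes merge into the standard ones when $N\equiv 3\pmod 4$ (the content of Theorem~\ref{thm:ESLconj3mod4}); the paper handles that case by a separate explicit conjugation with $S=\left(\begin{smallmatrix}q&0\\0&-1/q\end{smallmatrix}\right)$. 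One cosmetic caution: you use $N$ both for the field size and for the nilpotent matrix $F-\tfrac{t}{2}I$; rename the latter to avoid a clash.
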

\begin{remark}
So the determinant and trace completely fix the conjugacy class except when $t^2 - 4 \Delta =0$.
\end{remark}
\begin{proof}
To prove that the classes listed are disjoint note that $F\sim G$ implies $\det F = \det G$ and $\tr F = \tr G$, and that the classes $D_{ t}$ each consist of a single element.  It is therefore enough to show that 
\begin{equation}
C_{\Delta, t} \cap \bar{C}_t= \emptyset
\end{equation}
when $t^2 - 4 \Delta = 0$.  To see this suppose $F = \left( \begin{smallmatrix} \alpha & \beta \\ \gamma & \delta\end{smallmatrix}\right)\in C_{1,\pm 2}$.  Then there exists $S= \left( \begin{smallmatrix} x & z \\ y & w\end{smallmatrix}\right)\in \ESL(2,\mathbb{F}_N)$ such that
\begin{equation}
F = S \begin{pmatrix} 0 & -1 \\ 1 & \pm 2\end{pmatrix} S^{-1}
\end{equation}
Performing the algebra we find
\begin{equation}
\begin{pmatrix} \alpha & \beta \\ \gamma & \delta \end{pmatrix}
=
\begin{pmatrix} \Delta_S (z\mp x)(w \mp y) \pm 1 & - \Delta_S (z \mp x)^2 \\ \Delta_S (w \mp y)^2 & -\Delta_S (z\mp x)(w\mp y)\pm 1\end{pmatrix}
\end{equation}
where $\Delta_S = \det S$.  It follows that $\gamma \in Q \cup \{0\}$.  We conclude that $\left(\begin{smallmatrix} 0 & -1/\nu \\ \nu & \pm 2 \end{smallmatrix}\right) \notin C_{1,\pm2}$.  The proof that the classes $C_{-1,\pm 2 i}$ and $\bar{C}_{\pm 2 i}$ are disjoint is similar.

The remainder of the theorem is a straightforward consequence of Lemma~\ref{lem:simTrans}.

\end{proof}

If $N = 3 \text{ (mod $4$)}$ there are fewer conjugacy classes.  The reason is that in this case $-1 \in \bar{Q}$.  Consequently the classes $D_{\pm 2 i}$ and $\bar{C}_{\pm 2 i}$ do not exist.  Also $-\nu \in Q$, implying $-\nu = q^2$ for some $q \in \mathbb{F}^{*}_N$.  So the similarity transformation with $S = \left(\begin{smallmatrix} q & 0 \\ 0 & - 1/q\end{smallmatrix}\right)$ takes $\left(\begin{smallmatrix} 0 &  -1/\nu \\ \nu & \pm 2\end{smallmatrix}\right)$ to $\left(\begin{smallmatrix} 0 &  -1 \\ 1 & \pm 2\end{smallmatrix}\right)$, implying $\bar{C}_{\pm 2}= C_{1,\pm 2}$.  We thus have

\begin{theorem}
\label{thm:ESLconj3mod4}
Let $N = 3 \text{ (mod $4$)}$.   Then the conjugacy classes of $\ESL(2,\mathbb{F}_N)$ comprise
\begin{enumerate}
\item The $2N$ standard classes $C_{\Delta, t}$  with $\Delta = \pm 1$ and $t\in \mathbb{F}_N$.
\item The $2$ additional classes $D_{\pm 2}$.
\end{enumerate}
Let $F$ be any element of $\ESL(2,
\mathbb{F}_N)$ and let $\Delta = \det F$, $t = \tr F$.  Then
\begin{enumerate}
\item If $F\neq \pm I$
\begin{equation}
F \in C_{\Delta, t}
\end{equation}
\item If $F=\pm I$
\begin{equation}
F \in D_{\pm 2}
\end{equation}
\end{enumerate}
\end{theorem}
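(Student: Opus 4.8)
The plan is to derive everything from two ingredients already on the table: the similarity normal form of Lemma~\ref{lem:simTrans}, and the observations recorded just above the statement, namely that when $N = 3 \text{ (mod $4$)}$ one has $-1\in\bar{Q}$, so the classes $\bar{C}_{\pm 2i}$, $D_{\pm 2i}$ of the $N = 1 \text{ (mod $4$)}$ case simply do not exist, while the conjugation by $S = \left(\begin{smallmatrix} q & 0 \\ 0 & -1/q\end{smallmatrix}\right)$ with $q^2 = -\nu$ collapses $\bar{C}_{\pm 2}$ into $C_{1,\pm 2}$. As in Theorem~\ref{thm:ESLconj1mod4}, I would split the argument into a covering step (every $F$ lies in one of the listed classes) and a disjointness step (the listed classes are pairwise distinct), the only difference being that the residue-class bookkeeping of the $1 \text{ (mod $4$)}$ case now disappears.

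For the covering step, fix $F$ with $\Delta = \det F = \pm 1$ and $t = \tr F$. When $t^2 - 4\Delta \neq 0$, Lemma~\ref{lem:simTrans}(1) places $F$ in $C_{\Delta,t}$ directly. In the degenerate case $t^2 - 4\Delta = 0$ I would first note that, since $4\in Q$ and $-1\in\bar{Q}$, the element $-4$ is a non-residue; hence $t^2 = 4\Delta$ is solvable only for $\Delta = 1$, forcing $t = \pm 2$. If moreover $\beta = \gamma = 0$ then $F$ is diagonal with the repeated eigenvalue $\pm 1$, so $F = \pm I \in D_{\pm 2}$. Otherwise, say $\beta\neq 0$ (the case $\gamma\neq 0$ is symmetric), Lemma~\ref{lem:simTrans}(2) conjugates $F$ to $\left(\begin{smallmatrix} 0 & 1/(\beta u^2) \\ -\beta u^2 & t\end{smallmatrix}\right)$ for any $u\in\mathbb{F}^{*}_N$: choosing $u$ with $\beta u^2 = -1$ (possible iff $\beta\in\bar{Q}$) yields the representative $\left(\begin{smallmatrix} 0 & -1 \\ 1 & \pm 2\end{smallmatrix}\right)$ of $C_{1,\pm 2}$, whereas choosing $\beta u^2 = -\nu$ (possible iff $\beta\in Q$, since $-\nu\in Q$) yields the representative $\left(\begin{smallmatrix} 0 & -1/\nu \\ \nu & \pm 2\end{smallmatrix}\right)$ of $\bar{C}_{\pm 2}$. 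Invoking the merge $\bar{C}_{\pm 2} = C_{1,\pm 2}$ then gives $F\in C_{1,\pm 2}$ in both subcases. This is exactly the point at which the $N = 1$ and $N = 3 \text{ (mod $4$)}$ behaviours diverge, and I expect the clean articulation of this collapse to be the main obstacle.

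For the disjointness step, I would use that conjugation preserves both $\det$ and $\tr$, so distinct standard classes are separated by their labels $(\Delta,t) = (\det,\tr)$, and each is separated from $D_{\pm 2}$ except when $(\Delta,t) = (1,\pm 2)$. In that remaining case $D_{\pm 2} = \{\pm I\}$ is a singleton consisting of a scalar matrix, while $C_{1,\pm 2}$ contains the non-scalar element $\left(\begin{smallmatrix} 0 & -1 \\ 1 & \pm 2\end{smallmatrix}\right)$; a class containing a non-central element cannot equal a central singleton, so $C_{1,\pm 2}\cap D_{\pm 2} = \emptyset$. Combined with the covering step, this exhausts $\ESL(2,\mathbb{F}_N)$ and pins down the class of each $F$ as stated, the alternative $F\neq\pm I$ versus $F = \pm I$ merely recording whether $F$ is non-central (hence in some $C_{\Delta,t}$) or central (hence in $D_{\pm 2}$).
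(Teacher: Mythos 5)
Your proposal is correct and takes essentially the same route as the paper: covering via the normal forms of Lemma~\ref{lem:simTrans} combined with the two residue facts for $N=3$ (mod $4$) --- that $-1\in\bar{Q}$ (so no $t$ satisfies $t^2=-4$, eliminating the $\pm 2i$ classes) and that $-\nu\in Q$ (so the diagonal conjugation merges $\bar{C}_{\pm 2}$ into $C_{1,\pm 2}$) --- with disjointness from invariance of determinant and trace plus the fact that $D_{\pm 2}=\{\pm I\}$ are central singletons. You simply make explicit the details (the choices of $u$ in Lemma~\ref{lem:simTrans}(2),(3) and the case analysis) that the paper leaves implicit when it says the theorem follows.
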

\begin{remark}
Here too the determinant and trace completely fix the conjugacy class except when $t^2 - 4\Delta = 0$.
\end{remark}
Finally, for the sake of completeness, let us note that Lemma~\ref{lem:simTrans} can also be used to deduce the conjugacy classes of $\SL(2,\mathbb{F}_N)$ (which are given by, for example, Gehles~\cite{12}).  Since $\Delta = 1$ for all elements of   $\SL(2,\mathbb{F}_N)$ we use the simplified notation (for the standard classes)
\begin{equation}
C_t = \left[ \begin{pmatrix} 0 & - 1\\ 1 & t\end{pmatrix} \right]
\end{equation}
and (for the additional classes)
\begin{equation}
\bar{C}_{\pm 2} = \left[ \begin{pmatrix} 0 & - \frac{1}{\nu} \\ \nu & 2\end{pmatrix} \right] \qquad \text{and} \qquad
D_{\pm 2} = \left[\begin{pmatrix}  \pm 1  & 0 \\ 0 &  \pm 1\end{pmatrix}\right]
\end{equation}
where $\nu$ is any fixed element of $\bar{Q}$.  We then have
\begin{theorem}
\label{thm:SLConjClass}
The conjugacy classes of $\SL(2,\mathbb{F}_N)$ comprise
\begin{enumerate}
\item The $N$ standard classes $C_t$, for arbitrary $t\in \mathbb{F}_N$.
\item The $4$ additional classes $D_{\pm 2}$, $\bar{C}_{\pm 2}$.
\end{enumerate}

Let $F=\left(\begin{smallmatrix} \alpha & \beta \\ \gamma & \delta \end{smallmatrix}\right)\in \SL(2,\mathbb{F}_N)$, and let $t=\tr F$.  Then
\begin{enumerate}
\item If $t\neq \pm 2$
\begin{equation}
F \in C_{t}
\end{equation}
\item If $t = \pm 2$
\begin{equation}
F \in \begin{cases} C_{t} \qquad & \text{if $-\beta$ or $\gamma \in Q$}\\
\bar{C}_{t} \qquad & \text{if $-\beta$ or $\gamma \in \bar{Q}$}\\
D_{t} \qquad & \text{if $\beta = \gamma =0$}
\end{cases}
\end{equation}
(note that if $t=\pm 2$ it cannot happen that $-\beta \in Q$ and $\gamma \in \bar{Q}$, or that $-\beta \in \bar{Q}$ and $\gamma \in Q$).
\end{enumerate}
\end{theorem}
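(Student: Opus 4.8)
The plan is to read everything off from Lemma~\ref{lem:simTrans}, specialised to $\Delta = 1$. Since every $F \in \SL(2,\mathbb{F}_N)$ has $\det F = 1$, the discriminant $t^2 - 4\Delta$ collapses to $t^2 - 4$, which vanishes precisely when $t = \pm 2$. The ``moreover'' clause of the lemma lets us take $\det S = \Delta = 1$, so every conjugating matrix it produces already lies in $\SL(2,\mathbb{F}_N)$; there is thus no gap between $\ESL$-conjugacy and $\SL$-conjugacy for the cases the lemma covers. When $t \neq \pm 2$, part~(1) gives at once $F \sim \left(\begin{smallmatrix} 0 & -1 \\ 1 & t\end{smallmatrix}\right)$, i.e.\ $F \in C_t$, disposing of the generic case.

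It then remains to treat $t = \pm 2$, i.e.\ $t^2 - 4 = 0$. If $\beta = \gamma = 0$ then $\det F = \alpha\delta = 1$ together with $\alpha + \delta = \pm 2$ forces $\alpha = \delta = \pm 1$, so $F = \pm I$ and $F \in D_{\pm 2}$, a one-element class. Otherwise $\beta \neq 0$ or $\gamma \neq 0$, and I would apply part~(2) or part~(3) of Lemma~\ref{lem:simTrans} (with $\Delta = 1$) to bring $F$ into the normal form $M_k := \left(\begin{smallmatrix} 0 & -1/k \\ k & t\end{smallmatrix}\right)$, where the free parameter is $k = -\beta u^2$ if $\beta \neq 0$ and $k = \gamma u^2$ if $\gamma \neq 0$, with $u \in \mathbb{F}_N^{*}$ arbitrary. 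Since $u^2 \in Q$, the quadratic character of $k$ equals that of $-\beta$ (resp.\ $\gamma$). Choosing $u$ to rescale $k$ to $1$ when this character is $Q$, and to $\nu$ when it is $\bar{Q}$, identifies $F$ with $C_t$ or $\bar{C}_t$ respectively.

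The one genuinely substantive point, and the step I expect to be the main obstacle, is disjointness: that $C_{\pm 2}$ and $\bar{C}_{\pm 2}$ are distinct classes. Here I would mirror the explicit computation in the proof of Theorem~\ref{thm:ESLconj1mod4}: conjugating the representative $\left(\begin{smallmatrix} 0 & -1 \\ 1 & \pm 2\end{smallmatrix}\right)$ of $C_{\pm 2}$ by an arbitrary $S = \left(\begin{smallmatrix} x & z \\ y & w\end{smallmatrix}\right) \in \SL(2,\mathbb{F}_N)$ produces a matrix whose lower-left entry is $(w \mp y)^2 \in Q \cup \{0\}$ (here $\det S = 1$ is what makes the entry a genuine square). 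Since the representative $\left(\begin{smallmatrix} 0 & -1/\nu \\ \nu & \pm 2\end{smallmatrix}\right)$ of $\bar{C}_{\pm 2}$ has lower-left entry $\nu \in \bar{Q}$, the two classes cannot meet; equivalently, parts~(2)--(3) of Lemma~\ref{lem:simTrans} only ever replace the parameter $k$ of $M_k$ by $u^2 k$ or $u^2/k$, so the quadratic character of $k$ is a conjugacy invariant. Finally I would record the consistency remark that when $\beta,\gamma$ are both nonzero and $t = \pm 2$, the identity $(\alpha-\delta)^2 = (\alpha+\delta)^2 - 4\alpha\delta = -4\beta\gamma$ exhibits $-4\beta\gamma$ as a nonzero square, whence $(-\beta)\gamma = -\beta\gamma \in Q$, so $-\beta$ and $\gamma$ always share the same character and the two criteria in the statement agree. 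Trace separates all remaining classes and the central elements $\pm I$ occupy their own singleton classes $D_{\pm 2}$, completing the enumeration of the $N+4$ classes.
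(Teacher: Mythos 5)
Your proposal is correct and follows essentially the same route as the paper: the paper's proof of this theorem is simply ``similar to the proof of Theorem~\ref{thm:ESLconj1mod4}'', i.e.\ apply Lemma~\ref{lem:simTrans} with $\Delta = 1$ (using the ``moreover'' clause to keep $\det S = 1$) for the classification, and the explicit conjugation computation showing the lower-left entry stays in $Q \cup \{0\}$ for disjointness of $C_{\pm 2}$ and $\bar{C}_{\pm 2}$ --- exactly the two steps you carry out. Your added consistency check that $(\alpha-\delta)^2 = -4\beta\gamma$ forces $-\beta$ and $\gamma$ to share the same quadratic character is a nice explicit verification of the parenthetical note in the theorem.
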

\begin{remark}  So the trace completely determines the conjugacy class except when $t=\pm 2$.  Note that for the group $\SL(2,\mathbb{F}_N)$ there is no distinction between the cases $N = 1 \text{ (mod $4$)}$ and $N = 3 \text{ (mod $4$)}$.
\end{remark}
\begin{proof}
Similar to the proof of Theorem~\ref{thm:ESLconj1mod4}.
\end{proof}


\begin{thebibliography}{99}
\bibitem{1} E. P. Wigner, Phys. Rev. \textbf{40} 749 (1932) .  For
reviews see: M. Hillery, R. F. O'Connell, M. O. Scully and E. P. Wigner, 
{\it Phys. Repts.} \textbf{106} 121, (1984); Y. S. Kim and M. E. Noz, {\it Phase-Space 
Picture of Quantum Mechanics} (World Scientific, Singapore, 1991); W. P. Schleich, 
{\it Quantum Optics in Phase Space} (Wiley-VCH, Weinheim, 2001)

\bibitem{2} F. A. Buot, Phys. Rev. B {\bf 10}, 3700 (1974);  R. Jagannathan, {\it Studies in Generalized Clifford Algebras, Generalized Clifford Groups and their Physical Applications} Ph D thesis (University of Madras) (1976);  N. Mukunda, Am. J. Phys. {\bf 47}, 182 (1979); J. H. Hannay and M. V. Berry ,Physica D {\bf 1} 26 (1980); L. Cohen and M. Scully, Found. Phys. {\bf 16}, 295 (1986); R. P. Feynman R P  in {\it Quantum 
Implications. Essays in Honour of David Bohm} Eds. B. Hiley and D. Peat, 
(Routledge, London, 1987); W. K. Wootters, Ann. Phys. (N.Y.) \textbf{176} 1 (1987).

\bibitem{3} O. Cohendet, P. Combe, M. Siugue and M. Sirugue-Collin M,  J. Phys. A {\bf 21} 2875 (1988) ; D. Galetti and A. F. R. de Toledo Piza, Physica  A {\bf 149} 267(1988) ; 
J. A. Vacarro  and D. T. Pegg,  Phys. Rev. A {\bf 41}, 5156 (1990);
P. Kasperkovitz  and M. Peev, Ann. Phys. (N. Y.) {\bf 230} 21 (1994); 
A. Bouzouina  and S. Bi\`evre,  Comm Math. Phys. {\bf 178}, 83 (1996) ;
U. Leonhardt, Phys. Rev. A{\bf 53} 2998 (1996),  and  Phys. Rev. Lett. {\bf 76}, 4293 (1996); A. M. Rivas  and A. M. Ozorio de Almeida, Ann. Phys. (N. Y.) {\bf 276}, 123 (1999); 
M.Ruzzi and D. Galetti, J. Phys. A {\bf 33}, 1065 (1999); M. Horibe, A. Takami, T. Hashimoto and A. Hayashi {\it  Phys. Rev. A} {\bf 65} 032105 (2002).  

\bibitem{4} N. Mukunda, Arvind, S. Chaturvedi and R. Simon,  
 J. Math. Phys. \textbf{45}, 114 (2004); S. Chaturvedi, E. Ercolessi, G. Marmo, G. Morandi, N. Mukunda and R. Simon, Pramana,J. Phys. {\bf 65}, 981 (2005); A. Vourdas, Rep. Prog. Phys. {\bf 67}, 267(2004); D. Gross, J. Math. Phys. {\bf 47}, 122107 (2006).  

\bibitem{5} K. S. Gibbons, M. J. Hoffman and W. K. Wootters, 
{\it Phys. Rev. A} {\bf 70} 062101 (2004) ; W. K. Wootters, IBM J. of Research and Development {\bf 48}, 99 (2004); W. K. Wootters, Foundations of Physics {\bf 36}, 112 (2006).


\bibitem{7} P. Bianucci, C. Miquel, J. P. Paz and M. Saraceno, 
Phys. Lett. A {\bf 297}, 353 (2002) ; 
R. Asplund and G. Bj\"ork, Phys. Rev. A {\bf 64}, 012106 (2001);
C. Miquel, J. P. Paz, M. Saraceno, E. Knill, 
R. Laflamme and C. Negrevergne, Nature (London) {\bf 418}, 59 (2002) ; C. Miquel, Paz J P and M. Saraceno, Phys. Rev. A {\bf 65}, 062309 (2002);  
J. P. Paz, Phys. Rev. A {\bf 65}, 062311 (2002);   
J. P. Paz, A. J. Roncaglia and M. Saraceno, Phys. Rev. A{\bf 72}, 012309(2004);



\bibitem{8} J. Schwinger, {\it Proc. Nat. Acad. Sci. USA}  {\bf 46} 570 (1960); I. D. Ivanovic,  J. Phys. A {\bf 14}, 3241(1981); W. K. Wootters 
 and  B. D. Fields, Ann. Phys. (N.Y.) {\bf 191}, 363 (1989); A. R. Calderbank 
, P. J. Cameron, W. M. Kantor and J. J. Seidel, Proc. London. Math. 
Soc. {\bf  75} 436(1997); S. Bandyopadhyay, P. O. Boykin, V. Roychowdhury and F. Vatan, 
 Algorirhmica, {\bf 34} 512 (2002) ; J. Lawrence, C. Brukner and A. Zeilinger, 
 Phys. Rev. A {\bf 65} 032320 (2002); S. Chaturvedi,  
Phys. Rev. A {\bf 65} 044301 (2002) ; A. O. Pittenger and M. H. Rubin, 
 Linear Alg. Appl. {\bf 390} 255 (2004) ; A. O. Pittenger and  M. H. Rubin, 
 J. Phys. A {\bf 38} 6005 (2005); A. Klappenecker and M. R\"otteler, Lecture Notes in Computer Science  {\bf 2948} 137 (2004); K. R. Parthasarathy, Anal. Quantum Probab. Relat. Top. {\bf 7}, 607 (2004).
 
\bibitem{9} M. Saniga, M. Planat and H. Rosu, J. Opt.  Quantum 
Semiclass. B{\bf 6}, L19 (2004) ; P. Wocjan and T. Beth, Quantum Information 
and Computation {\bf 5}, 93 (2005); A. Hayashi A, M. Horibe M and Hashimoto T Phys. Rev. A 71, 052331 (2005)  ; H. Barnum, eprint, quant-ph/0205155 (2002);  A. Klappenecker and M. R\"otteler Proc. 2005 IEEE International Symposium on Information Theory, Adelaide, Australia, pp. 1740-1744, 2005; 
\bibitem{10} D. M. Appleby, J. Math. Phys. {\bf 46}, 052107 (2005).
\bibitem{11}G. Zauner  {\it Quantumdesigns: Grundz\"uge einer nichtkommutativen Designtheorie} Ph.D. thesis (Universit\"at Wien) (1999);  
J. M. Renes, R. Blume-Kohout, A. J. Scott, and C. M. Caves, J. Math. Phys. {\bf 45}, 2171 (2004); 
M. Grassl, 
eprint, quant-ph/0406175. 

\bibitem{AF} D. M. Appleby, H. B. Dang, and C. A. Fuchs, eprint, arXiv:0707.2071.
\bibitem{12} See J. E. Humphreys, Amer. Math. Month. {\bf 82}, 21 (1975), or 
K. E. Gehles, MSc Thesis, Univ. of St Andrews 2002.  Also see S.T.~Flammia, \emph{J. Phys. A}, \textbf{39}, 13483 (2006).
\bibitem{Hardy} G.H.~Hardy and E.M.~Wright, \emph{An Introduction to the Theory of Numbers}, fifth edition (Clarendon Press, Oxford, 1979).
\bibitem{Lidl} R.~Lidl and H.~Niederreiter, \emph{Finite Fields}, Encyclopedia of Mathematics and its Applications 20, second edition (Cambridge University Press, Cambridge, 1997).





\end{thebibliography}
\end{document}